\providecommand{\U}[1]{\protect\rule{.1in}{.1in}}
\newtheorem{theorem}{Theorem}
\newtheorem{lemma}{Lemma}
\newenvironment{proof}[1][Proof]{\noindent\textbf{#1.} }{\ \rule{0.5em}{0.5em}}
\begin{document}

\title{\textbf{Strong converse rates for classical communication over thermal and
additive noise bosonic channels}}
\author{Bhaskar Roy Bardhan\\\textit{{\small {Hearne Institute for Theoretical Physics,}}} \\\textit{{\small {Department of Physics and Astronomy,}}} \\\textit{{\small {Louisiana State University,}}} \\\textit{{\small {Baton Rouge, Louisiana 70803, USA}}}
\and Mark M. Wilde\\\textit{{\small {Hearne Institute for Theoretical Physics,}}} \\\textit{{\small {Department of Physics and Astronomy,}}} \\\textit{{\small {Center for Computation and Technology,}}} \\\textit{{\small {Louisiana State University,}}} \\\textit{{\small {Baton Rouge, Louisiana 70803, USA}}}}
\date{\today}
\maketitle

\begin{abstract}
We prove that several known upper bounds on the classical capacity of thermal
and additive noise bosonic channels are actually strong converse rates. Our
results strengthen the interpretation of these upper bounds, in the sense that
we now know that the probability of correctly decoding a classical message
rapidly converges to zero in the limit of many channel uses if the
communication rate exceeds these upper bounds. In order for these theorems to
hold, we need to impose a maximum photon number constraint on the states input
to the channel (the strong converse property need not hold if there is only a
mean photon number constraint). Our first theorem demonstrates that Koenig and
Smith's upper bound on the classical capacity of the thermal bosonic channel
is a strong converse rate, and we prove this result by utilizing the
structural decomposition of a thermal channel into a pure-loss channel
followed by an amplifier channel. Our second theorem demonstrates that
Giovannetti \textit{et al}.'s upper bound on the classical capacity of a
thermal bosonic channel corresponds to a strong converse rate, and we prove
this result by relating success probability to rate, the effective dimension
of the output space, and the purity of the channel as measured by the R\'enyi
collision entropy. Finally, we use similar techniques to prove that similar
previously known upper bounds on the classical capacity of an additive noise
bosonic channel correspond to strong converse rates.

\end{abstract}

\section{Introduction}

A principal goal of quantum information theory is to understand the
transmission of classical data over many independent uses of a noisy quantum
channel. We say that a fixed rate of communication is achievable if for every
$\varepsilon>0$ there exists a coding scheme using the channel a sufficiently
large number of times such that its error probability is no larger than
$\varepsilon$. The maximum achievable rate for a given channel is known as the
classical capacity of the channel~\cite{Hol98,SW97}.

According to the above definition of capacity, there cannot exist an
error-free communication scheme if its rate exceeds capacity. Such a statement
is known as a ``weak converse theorem,'' and even though it establishes
capacity as a threshold, it suggests that it might be possible for one to
increase the communication rate $R$ by allowing for some error $\varepsilon>0$
whenever $R$ exceeds the capacity. However, a strong converse theorem (if it
holds) demonstrates that there is no such room for a trade-off between rate
and error in the limit of many independent uses of the channel (see
Figure~\ref{Fig1}(a) for a conceptual illustration of this idea). That is, a
strong converse theorem establishes capacity as a very sharp threshold, so
that it is guaranteed that the error probability of any communication scheme
converges to one in the limit of many independent channel uses if its rate
exceeds capacity. A strong converse theorem holds for the classical capacity
of all classical channels \cite{Wolfowitz1964,Arimoto}, and a number of works
have now established strong converse theorems for the classical capacity of
certain quantum channels~\cite{W99,Ogawa,KW09,Entanglementbreaking}. Recently,
a strong converse theorem has been proved to hold for the classical capacity
of the pure-loss bosonic channel~\cite{StrongConversePureLoss}.

The present paper considers the transmission of classical data over two
bosonic channels: the thermal noise channel and the additive noise channel. In
particular, we are interested in determining sharp thresholds for
communication over them, in the strong converse sense mentioned above. Both of
these channels are important models for understanding the ultimate
information-carrying capacity of electromagnetic waves and have been
investigated extensively
\cite{HW01,GGLMSY04,GGLMS04,GLMS04,HG12,KS2,KS1,KS3,PracticalPurposes,Lupo2011}. In the
thermal noise channel, the environment begins in a thermal equilibrium and the
channel mixes these noise photons with the signaling photons. More
specifically, this channel is modeled as a beamsplitter with transmissivity
$\eta$ which mixes the signaling photons (with average photon number $N_{S}$)
with a thermal state of average photon number $N_{B}$. In the additive noise
channel, each signal mode is randomly displaced in phase space according to a
Gaussian distribution \cite{HO93,Hall94}. Interestingly, the additive noise
bosonic channel can be obtained as a limiting case of the thermal noise
channel in which $\eta\rightarrow1$ and $N_{B} \rightarrow\infty$, with
$(1-\eta)N_{B} \rightarrow\bar{n}$, where $\bar{n}$ is the variance of the
noise introduced by the additive noise channel~\cite{GGLMS04}. This relation
allows for extending many results regarding the thermal channel to the
additive noise channel.

In this paper, we prove that several previously known upper bounds on the
classical capacity of these channels are actually ``strong converse rates''
\cite{GGLMS04,KS3,PracticalPurposes}. This means that the probability of
successfully decoding a classical message converges exponentially fast to zero
in the limit of many channel uses if the rate $R$ of communication exceeds
these strong converse rates. Previous work
\cite{GGLMS04,KS3,PracticalPurposes} has established that these upper bounds
are ``weak converse rates,'' meaning that there cannot be any error-free
communication scheme if the rate $R$ of communication exceeds them. Having an
upper bound serve as only a weak converse rate $R_{W}$ suggests that it might
be possible for one to increase the communication rate $R$ by allowing for
some error $\varepsilon>0$ whenever $R > R_{W}$. Our work here demonstrates
that there is no such room for a trade-off between rate and error in the limit
of many independent uses of the channel (see Figure~\ref{Fig1}(b) for a
conceptual illustration of this idea). Thus, our work strengthens the
interpretation of the upper bounds from
\cite{GGLMS04,GLMS04,KS3,PracticalPurposes}.

\begin{figure}[ptb]
\centering
\includegraphics[width=\columnwidth]{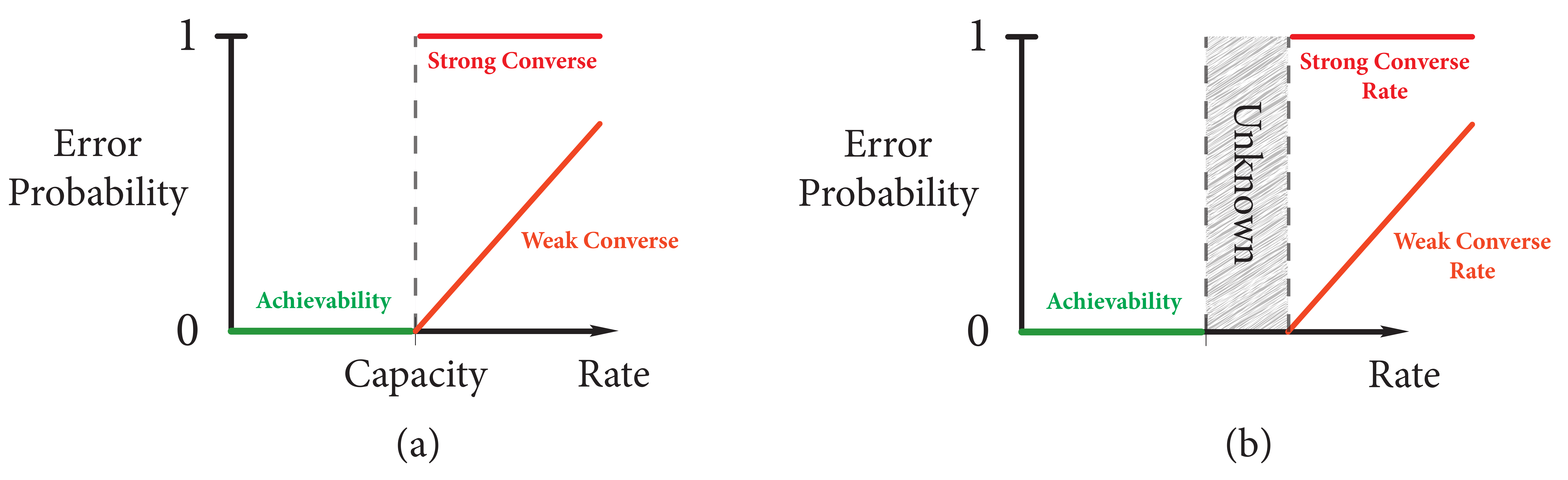}\caption{(Color online) Weak converse versus
strong converse rates for channels in which (a) the classical capacity is
exactly known (such as the pure-loss bosonic
channel~\cite{GGLMSY04,StrongConversePureLoss}), and (b) only lower bounds and
upper bounds are known, while the exact classical capacity is not known (such
as the thermal channel). In both cases, the figures illustrate the idea that
the error probability converges to one in the limit of many channel uses if a
communication rate corresponds to a strong converse rate, whereas establishing
a communication rate as a weak converse rate suggests that there exists room for a
trade-off between communication rate and error proability. Achievable rates
are such that there exists a communication scheme whose error probability
converges to zero in the limit of many channel uses.}%
\label{Fig1}%
\end{figure}

\section{Summary of results}

We now give a brief summary of the present paper's two main contributions:

\begin{enumerate}
\item {Following \cite{StrongConversePureLoss}, we begin by showing that a
strong converse theorem need not hold for the classical capacity of the
thermal noise channel and the additive noise channel whenever there is only a
\textit{mean photon number constraint}. }

\item {In light of the above observation and again following
\cite{StrongConversePureLoss}, we impose instead a maximum photon number
constraint, in such a way that nearly all of the ``shadow'' of the average
density operator for a given code is required to be on a subspace with photon
number no larger than a particular number, so that the shadow outside this
subspace vanishes in the limit of many channel uses. Under such a maximum
photon number constraint, we demonstrate that several previously known upper
bounds \cite{GGLMS04,GLMS04,KS3,PracticalPurposes} on the classical capacity
of the thermal and additive noise bosonic channels correspond to strong
converse rates.}
\end{enumerate}

The present paper is organized as follows. In Section~\ref{Prelims}, we review
several preliminary ideas, including mathematical definitions of the thermal
noise channel and additive noise channels and structural decompositions of
them that are useful in our work. We also present the basic notions of the
quantum R\'enyi entropy and its relation with smooth min-entropy
\cite{RWISIT1}. Section~\ref{Meanphotonnumber} illustrates a simple proof that
the strong converse need not hold with only a mean photon number constraint
for the above two bosonic channels, following the approach given in
\cite{StrongConversePureLoss}. In Section~\ref{SCT}, we instead impose a
maximum photon number constraint and prove that several previously known upper
bounds \cite{GGLMS04,GLMS04,KS3,PracticalPurposes} on the classical capacity
of the thermal noise channel and the additive noise channel are actually
strong converse rates.
Section~\ref{sec:conclusion} contains some concluding remarks and an outlook
for future research, in particular the implications of our results for other
noisy bosonic channels.

\section{Preliminaries}

\label{Prelims}

\subsection{Thermal noise channel}

\label{sec:thermal}

The thermal noise channel is represented by a Gaussian completely positive
trace-preserving (CPTP) map---i.e., it evolves Gaussian input states to
Gaussian output states \cite{WPGCRSL12}. The thermal channel can be modeled by
a beamsplitter of transmissivity $\eta$ that couples the input signal with a
thermal state of mean photon number $N_{B}$. The parameter $\eta\in[0,1]$
characterizes the fraction of input photons that make it to the output on
average. The special case $N_{B}=0$ (zero-temperature reservoir) corresponds
to the pure-loss bosonic channel $\mathcal{E}_{\eta,0}$, in which each input
photon has a probability $\eta$ of reaching the output.

The beamsplitter transformation corresponding to the thermal channel can be
written as the following Heisenberg evolution of the signal mode
operator~$\hat{a}$ and the environmental mode operator~$\hat{b}$:
\begin{align}
\hat{a}  &  \longrightarrow\sqrt{\eta}\hat{a}+\sqrt{1-\eta}\hat{b} \,
,\nonumber\\
\hat{b}  &  \longrightarrow-\sqrt{1-\eta}\hat{a} + \sqrt{\eta}\hat{b}\, .
\label{eq:bs-trans}%
\end{align}
Tracing out the environmental mode $\hat{b}$ yields the following CPTP map
$\mathcal{E}_{\eta,N_{B}}$ for the thermal noise channel:
\begin{equation}
\mathcal{E}_{\eta,N_{B}}=\text{Tr}_{\hat{b}}\left[  U (\rho_{a} \otimes
\rho_{b}) U^{\dagger}\right]  ,
\end{equation}
where $\rho_{a}$ and $\rho_{b}$ correspond to the input state and the
environmental thermal state, respectively, and the unitary $U$ can be inferred
from the transformation in (\ref{eq:bs-trans}). The thermal state $\rho_{b}$
is equivalent to an isotropic Gaussian mixture of coherent states with average
photon number $N_{B}\geq 0$ \cite{GK04}:
\begin{equation}
\rho_{b}=\int d^{2} \alpha\, \frac{\exp(-|\alpha|^{2}/N_{B})}{\pi N_{B}} \,
|\alpha\rangle\langle\alpha|=\frac{1}{(N_{B}+1)} \sum_{l=0}^{\infty} \left(
\frac{N_{B}}{N_{B}+1}\right)  ^{l} | l \rangle\langle l | .
\end{equation}
As an example, we can see that a vacuum state at the input of the thermal
channel produces the following thermal state at the output:
\begin{equation}
\label{outputthermalvac}\mathcal{E}_{\eta,N_{B}} (|0 \rangle\langle
0|)=\frac{1}{\left(  (1-\eta)N_{B}+1\right)  } \sum_{l=0}^{\infty} \left(
\frac{(1-\eta)N_{B}}{(1-\eta)N_{B}+1}\right)  ^{l}| l \rangle\langle l | .
\end{equation}

Despite extensive efforts to find the classical capacity of the thermal
channel \cite{HW01,GGLMSY04,GGLMS04,GLMS04,KS2,KS1,KS3,PracticalPurposes}, it
is still unknown. However, a few upper and lower bounds on it are now known.
Holevo and Werner have shown that the classical capacity $C(\mathcal{E}%
_{\eta,N_{B}})$ of the thermal noise channel $\mathcal{E}_{\eta,N_{B}}$
satisfies~\cite{HW01}
\begin{equation}
C(\mathcal{E}_{\eta,N_{B}}) \geq g(\eta N_{S}+(1-\eta) N_{B})-g((1-\eta
)N_{B}), \label{eq:LB}%
\end{equation}
where
\begin{equation}
g(x) \equiv(x+1) \log_{2} (x+1)-x \log_{2} x
\end{equation}
is the entropy of a bosonic thermal state with mean photon number $x$. They
established this lower bound by proving that coherent-state coding schemes
achieve the communication rate on the RHS of (\ref{eq:LB}). It has been
conjectured that the above lower bound is equal to the classical capacity of
the thermal noise channel, provided that a certain minimum output entropy
conjecture is true~\cite{GGLMS04}. The results of
\cite{GGLMS04,GLMS04} establish the following upper bound on the classical
capacity of the thermal bosonic channel:\footnote{The fact that the results of
\cite{GGLMS04,GLMS04} give upper bounds on the classical capacity of the
thermal channel was recently communicated in \cite{PracticalPurposes}.}
\begin{equation}
\label{upperboundGV}C(\mathcal{E}_{\eta,N_{B}}) \le g(\eta N_{S}+(1-\eta)
N_{B})-\log_{2}(1+2(1-\eta)N_{B}).
\end{equation}
This upper bound lies within 1.45 bits of the lower bound in (\ref{eq:LB}).
Koenig and Smith determined tight upper bounds on the classical capacity of
the thermal noise channel whenever $\eta=1/2$~\cite{KS1,KS2}, by proving a
quantum entropy power inequality. They also established the following upper
bound on the classical capacity $C(\mathcal{E}_{\eta,N_{B}})$~\cite{KS3}:
\begin{equation}
C(\mathcal{E}_{\eta,N_{B}}) \leq g(\eta N_{S}/[(1-\eta)N_{B}+1]).
\label{eq:KS-upper-bound}%
\end{equation}
This latter bound is also within 1.45 bits of the lower bound in (\ref{eq:LB}).

In this paper, we show that both of the upper bounds in (\ref{upperboundGV})
and (\ref{eq:KS-upper-bound}) correspond to strong converse rates.

\subsection{Additive noise channel}

\label{sec:additive}

The additive noise channel is specified by the following CPTP map:
\begin{equation}
\label{classical}\mathcal{N}_{\bar{n}} (\rho) \equiv\int d^{2} \alpha\,
P_{\bar{n}} (\alpha) \, D(\alpha)\rho D^{\dagger}(\alpha),
\end{equation}
where $P_{\bar{n}} (\alpha)= \exp\left(  -|\alpha|^{2}/\bar{n}\right)  /
(\pi\bar{n})$ and $D(\alpha)\equiv\exp(\alpha\hat{a}^{\dagger}-\alpha^{*}
\hat{a})$ is a displacement operator for the input signal mode $\hat{a}$. The
Gaussian probability distribution $P_{\bar{n}} (\alpha)$ determines the random
displacement of the signal mode $\hat{a}$ in phase space. The variance
$\bar{n}$ of this distribution completely characterizes the additive noise
channel $\mathcal{N}_{\bar{n}} $, and it represents the number of noise
photons added to the mode $\hat{a}$ by the channel~\cite{Hall94}. For $\bar
{n}=0$, the CPTP map in (\ref{classical}) reduces to the identity channel,
while for $\bar{n}>0$, noise photons are injected into the channel. As an
example, we can see that the action of the classical noise channel
$\mathcal{N}_{\bar{n}}$ on a vacuum-state input produces a thermal-state
output:
\begin{equation}
\label{outputadditivevac}\mathcal{N}_{\bar{n}} (|0 \rangle\langle0|)=\frac
{1}{\bar{n}+1} \sum_{l=0}^{\infty}\left(  \frac{\bar{n}}{\bar{n}+1}\right)
^{l} \vert l \rangle\langle l \vert.
\end{equation}

Since the additive noise channel can be obtained from the thermal noise
channel in the limit $\eta\rightarrow1$ and $N_{B} \rightarrow\infty$, with
$(1-\eta)N_{B} \rightarrow\bar{n}$~\cite{GGLMS04} (see also
Appendix~\ref{symplectic} for a review of this), many results regarding the
thermal channel apply to the additive noise channel as well. For example, we
obtain the following bounds on the classical capacity of the additive noise
channel $\mathcal{N}_{\bar{n}}$ simply by taking the aforementioned limit in
(\ref{eq:LB}), (\ref{upperboundGV}), and (\ref{eq:KS-upper-bound}),
respectively:
\begin{align}
C(\mathcal{N}_{\bar{n}})  &  \geq g(N_{S}+\bar{n})-g(\bar{n}) \,
,\label{eq:additiveLB}\\
C(\mathcal{N}_{\bar{n}})  &  \leq g(N_{S}+\bar{n})-\log_{2}(1+2 \bar{n}) \,
,\label{additiveUB2}\\
C(\mathcal{N}_{\bar{n}})  &  \leq g(N_{S}/[\bar{n}+1]) \, .
\label{additiveUB1}%
\end{align}
This last bound easily follows from (\ref{eq:KS-upper-bound}), but as far as
we can tell, it appears to be new.

In this paper, we prove that both of the upper bounds in (\ref{additiveUB2})
and (\ref{additiveUB1}) correspond to strong converse rates.

\subsection{Structural decompositions}

\label{sec:decompositions}

Both the thermal and additive noise channels can be decomposed as a
concatenation of other channels \cite{SGGLMY04,GGLMS04,CGH06,GNLSC12}, and
these decompositions are helpful in establishing upper bounds on capacity. We
briefly review these decompositions in this section and, for convenience, give
a full proof of them in Appendix~\ref{symplectic} using the symplectic
formalism \cite{HW01,WPGCRSL12}. The thermal noise channel $\mathcal{E}%
_{\eta,N_{B}}$ can be regarded as the application of the additive noise
channel $\mathcal{N}_{(1-\eta) N_{B}}$ to the output of the pure-loss bosonic
channel $\mathcal{E}_{\eta,0}$~\cite{GGLMS04}:
\begin{equation}
\label{compositionI}\mathcal{E}_{\eta,N_{B}} (\rho)=\left(  \mathcal{N}%
_{(1-\eta) N_{B}} \circ\mathcal{E}_{\eta,0}\right)  (\rho) .
\end{equation}
The following alternative composition rule holds for the thermal noise
channel~\cite{CGH06} (see also \cite{GNLSC12} and \cite{KS3}):
\begin{equation}
\mathcal{E}_{\eta,N_{B}}(\rho)=\left(  \mathcal{A}_{G_{1}} \circ
\mathcal{E}_{\eta_{1},0}\right)  (\rho) , \label{eq:compositionII}%
\end{equation}
where $\mathcal{A}_{G_{1}}$ is an amplifier channel with gain $G_{1}%
=(1-\eta)N_{B}+1$ and $\mathcal{E}_{\eta_{1},0}$ is the pure-loss bosonic
channel with transmissivity $\eta_{1}=\eta/G_{1}$. This means that the thermal
noise channel $\mathcal{E}_{\eta,N_{B}}$ can be viewed as a cascade of the
above two channels, in which the input state is propagated through the
pure-loss bosonic channel and followed by the amplifier channel. Taking the
limits $\eta\rightarrow1$ and $N_{B} \rightarrow\infty$, with $(1-\eta)N_{B}
\rightarrow\bar{n}$, we obtain from (\ref{eq:compositionII}) the following
composition rule for the additive noise channel \cite{SGGLMY04}:
\begin{equation}
\label{c3}\mathcal{N}_{\bar{n}}(\rho)=(\mathcal{A}_{\bar{n}+1} \circ
\mathcal{E}_{\frac{1}{\bar{n}+1},0})(\rho)\, .
\end{equation}

\subsection{Quantum R\'enyi entropy and smooth min-entropy}

The quantum R\'{e}nyi entropy $H_{\alpha}(\rho)$ of a density operator $\rho$
is defined for $0<\alpha<\infty$, $\alpha\neq1$ as
\begin{equation}
H_{\alpha}(\rho)\equiv\frac{1}{1-\alpha}\log_{2}\operatorname{Tr}[\rho
^{\alpha}]\,.
\end{equation}
It is a monotonic function of the \textquotedblleft$\alpha$%
-purity\textquotedblright\ $\operatorname{Tr}[\rho^{\alpha}]$, and the von
Neumann entropy $H(\rho)$ is recovered from it in the limit $\alpha
\rightarrow1$:%
\[
\lim_{\alpha\rightarrow1}H_{\alpha}(\rho)=H(\rho)\equiv-\operatorname{Tr}%
[\rho\log_{2}\rho]\,.
\]
The min-entropy is recovered from it in the limit as $\alpha\rightarrow\infty
$:%
\[
\lim_{\alpha\rightarrow\infty}H_{\alpha}(\rho)=H_{\min}\left(  \rho\right)
\equiv-\log_{2}\left\Vert \rho\right\Vert _{\infty},
\]
where $\left\Vert \rho\right\Vert _{\infty}$ is the infinity norm of $\rho$.
For an additive noise channel $\mathcal{N}_{\bar{n}}$, the R\'{e}nyi entropy
$H_{\alpha}(\mathcal{N}_{\bar{n}}(\rho))$ for $\alpha\in\{2,3,\ldots\}$
achieves its minimum value when the input $\rho$ is the vacuum state
$|0\rangle$~\cite{GLMS04}:
\begin{equation}
\min_{\rho}H_{\alpha}(\mathcal{N}_{\bar{n}}(\rho))=H_{\alpha}(\mathcal{N}%
_{\bar{n}}(|0\rangle\langle0|))=\frac{\log_{2}[(\bar{n}+1)^{\alpha}-\bar
{n}^{\alpha}]}{\alpha-1}\,.
\end{equation}
Similarly, for the thermal noise channel $\mathcal{E}_{\eta,N_{B}}$, the
R\'{e}nyi entropy $H_{\alpha}(\mathcal{E}_{\eta,N_{B}}(\rho))$ for $\alpha
\in\{2,3,\ldots\}$ achieves its minimum value when the input $\rho$ is the
vacuum state $|0\rangle$~\cite{GLMS04}:
\begin{equation}
\min_{\rho}H_{\alpha}(\mathcal{E}_{\eta,N_{B}}(\rho))=H_{\alpha}%
(\mathcal{E}_{\eta,N_{B}}(|0\rangle\langle0|))=\frac{\log_{2}[((1-\eta
)N_{B}+1)^{\alpha}-((1-\eta)N_{B})^{\alpha}]}{\alpha-1}\,.
\end{equation}
One of the most important questions in quantum information theory is whether
the vacuum input still gives the minimum output R\'enyi entropy for other
values of $\alpha$ (with the case $\alpha=1$ being of especial importance
\cite{GGLMS04}).

An elegant generalization of the R\'{e}nyi entropy is the smooth R\'{e}nyi
entropy, first introduced by Renner and Wolf for classical probability
distributions~\cite{RS}. The results there were further generalized to the
quantum case (density operators) by considering the set $\mathcal{B}%
^{\varepsilon}(\rho)$ of density operators $\tilde{\rho}$ that are
$\varepsilon$-close to $\rho$ in trace distance for $\varepsilon\geq
0$~\cite{RennerThesis}. The $\varepsilon$-smooth quantum R\'{e}nyi entropy of
order $\alpha$ of a density operator $\rho$ is defined as~\cite{RennerThesis}%
\begin{equation}
H_{\alpha}^{\varepsilon}(\rho)\equiv\left\{
\begin{array}
[c]{cc}%
\inf_{\tilde{\rho}\in\mathcal{B}^{\varepsilon}(\rho)}H_{\alpha}(\tilde{\rho
}) & 0\leq\alpha<1\\
\sup_{\tilde{\rho}\in\mathcal{B}^{\varepsilon}(\rho)}H_{\alpha}(\tilde{\rho
}) & 1<\alpha<\infty
\end{array}
\right.  .
\end{equation}
In the limit as $\alpha\rightarrow\infty$, we recover the smooth min-entropy
of $\rho$ \cite{RennerThesis,Tomamichelthesis}:%
\begin{equation}
H_{\min}^{\varepsilon}(\rho)\equiv\sup_{\tilde{\rho}\in\mathcal{B}%
^{\varepsilon}(\rho)}\left[  -\log_{2}\left\Vert \widetilde{\rho}\right\Vert
_{\infty}\right]  \,. \label{normsmoothing}%
\end{equation}
From the above, we see that the following relation holds%
\begin{equation}
\inf_{\widetilde{\rho}\in\mathcal{B}^{\varepsilon}\left(  \rho\right)
}\left\Vert \widetilde{\rho}\right\Vert _{\infty}=2^{-H_{\min}^{\varepsilon
}\left(  \rho\right)  }\,.
\end{equation}
The following inequality is one of the main results of \cite{RS}, and it
demonstrates a connection between the smooth min-entropy and the R\'{e}nyi
entropy of order $\alpha>1$:%
\begin{equation}
H_{\min}^{\varepsilon}\left(  \rho\right)  \geq H_{\alpha}\left(  \rho\right)
-\frac{1}{\alpha-1}\log_{2}\left(  \frac{1}{\varepsilon}\right)  .
\label{Renyismoothing}%
\end{equation}
For convenience, Appendix~\ref{smoothentropy} reviews the proof of the above
inequality from \cite{RS}.

\subsection{Strong converse for the noiseless qubit channel}

For a noiseless qubit channel, the argument for the strong converse theorem is
rather simple~\cite{N99,KW09}, but it plays an important role in this work, so
we review it briefly. Suppose that any scheme for classical communication over
$n$ noiseless qubit channels consists of an encoding of the message $m$ as a
quantum state on $n$ qubits, followed by a decoding POVM $\{\Lambda_{m}\}$. The rate of the code is $R=(\log_{2}M)/n$, and the success
probability for a receiver to correctly recover the message is upper
bounded as%
\begin{align}
\frac{1}{M}\sum_{m}\text{Tr}\{\Lambda_{m}\rho_{m}\}  &  \leq\frac{1}{M}%
\sum_{m}\text{Tr}\{\Lambda_{m}\}||\rho_{m}||_{\infty}\nonumber\\
&  \leq\frac{1}{M}\sum_{m}\text{Tr}\{\Lambda_{m}\}\nonumber\\
&  =M^{-1}2^{n}\nonumber\\
&  =2^{-n(R-1)}\nonumber
\end{align}
In the above, we have used that the infinity norm $||\rho_{m}||_{\infty}$ is never larger
than one, and $\sum_{m}\Lambda_{m}=I^{\otimes n}$ for any POVM $\{\Lambda
_{m}\}$. The above argument demonstrates that for a rate $R>1$, the success
probability of any communication scheme decreases exponentially fast to zero
with increasing $n$.

The above proof for the noiseless qubit channel highlights an interplay of the
success probability of decoding with rate, the dimension of
the channel output space, and the purity of the channel (quantified by the
infinity norm of the output states of the channel). Our argument for the
thermal and additive noise channel can be viewed as a generalization of the above proof.

\section{No strong converse under a mean photon number constraint}

\label{Meanphotonnumber}

If we allow the input signal states to have an arbitrarily large number of
photons, then the classical capacity of the thermal noise channel is infinite
\cite{HW01}. Thus, in order to have a sensible notion of classical capacity
for this channel, we must impose some kind of constraint on the photon number
of the signaling states. A common constraint employed in the literature
\cite{HW01,GGLMSY04} is that the mean number of photons in any codeword
transmitted through the channel can be at most $N_{S}\geq0$ for each use of
the channel (mean photon number constraint). In this section, we show that a
strong converse does not hold for the classical capacity of the thermal noise
and additive noise bosonic channels under such a \emph{mean photon number
constraint}. The arguments presented here for proving this are the same as in
\cite{StrongConversePureLoss}.

In order to show a violation of the strong converse with a mean photon number
constraint, we consider the encoding of a classical message $m$ into $n$-mode
coherent-state codewords, where each codeword is independently sampled from an
isotropic complex Gaussian distribution with variance $P>N_{S}$%
~\cite{GGLMSY04,StrongConversePureLoss}. Let
\begin{equation}
|\alpha^{n}(m)\rangle\equiv|\alpha_{1}(m)\rangle\otimes\cdots\otimes
|\alpha_{n}(m)\rangle\label{encoding}%
\end{equation}
denote each of the $n$-mode coherent-state codewords, and we demand that every
such codeword in the codebook $\{|\alpha^{n}(m)\rangle\}_{m\in\lbrack M]}$ has
mean photon number $P>N_{S}$ (letting $M$ be the size of the message set and
denoting the message set by $[M]$).

The Holevo-Werner coding theorem \cite{HW01} states that if we choose these
codewords at a rate equal to $\frac{1}{n}\log_{2}\left(  M\right)  \approx g\left(
\eta P+\left(  1-\eta\right)  N_{B}\right)  -g\left(  \left(  1-\eta\right)
N_{B}\right)  $\ and transmit them over the thermal noise channel
$\mathcal{E}_{\eta,N_{B}}$, the receiver can decode them with arbitrarily
large success probability. That is, as long as $\frac{1}{n}\log_{2}\left(
M\right)  \approx g\left(  \eta P+\left(  1-\eta\right)  N_{B}\right)
-g\left(  \left(  1-\eta\right)  N_{B}\right)  $ and the number $n$ of channel
uses is sufficiently large, there exists a measurement $\left\{  \Lambda
_{m}\right\}  _{m\in\left[  M\right]  }$ and codebook such that%
\begin{equation}
\forall m\in\left[  M\right]  :\text{Tr}\left\{  \Lambda_{m}\mathcal{E}%
_{\eta,N_{B}}^{\otimes n}\left(  |\alpha^{n}(m)\rangle\langle\alpha
^{n}(m)|\right)  \right\}  \geq1-\varepsilon, \label{eq:bnd-succ-prob}%
\end{equation}
for $\varepsilon$ an arbitrarily small positive number.

\subsection{No strong converse with mixed-state codewords}

What we can do to violate the strong converse is to pick $P$ such that%
\[
g\left(  \eta P+\left(  1-\eta\right)  N_{B}\right)  -g\left(  \left(
1-\eta\right)  N_{B}\right)  >g(\eta N_{S}+(1-\eta)N_{B})-\log_{2}%
(1+2(1-\eta)N_{B}),
\]
where the term on the RHS\ is the upper bound from (\ref{upperboundGV}) on the
capacity of a thermal channel $\mathcal{E}_{\eta,N_{B}}$ in which we allow for
a mean photon number $N_{S}$. We then modify the codebook given above so that
each codeword has the following form:%
\begin{equation}
\rho_{m}\equiv(1-p)\left\vert \alpha^{n}(m)\right\rangle \left\langle
\alpha^{n}(m)\right\vert +p(|0\rangle\langle0|)^{\otimes n},
\label{eq:mixed-state-codewords}%
\end{equation}
with $(1-p)P=N_{S}$ (mean photon number constraint) and $0\leq p\leq1$.
Observe that the mean photon number of these modified codewords is equal to
$N_{S}$ so that we satisfy the mean photon number constraint. Transmitting
these codewords through the thermal noise channel gives the state%
\[
\mathcal{E}_{\eta,N_{B}}^{\otimes n}\left(  \rho_{m}\right)  .
\]
The success probability for correctly decoding the codewords with the decoding
measurement $\{\Lambda_{m}\}$ for the original code is then%
\begin{align}
\operatorname{Tr}\{\Lambda_{m}\mathcal{E}_{\eta,N_{B}}^{\otimes n}\left(
\rho_{m}\right)  \}  &  \geq(1-p)\operatorname{Tr}\{\Lambda_{m}\mathcal{E}%
_{\eta,N_{B}}^{\otimes n}\left(  \left\vert \alpha^{n}(m)\right\rangle
\left\langle \alpha^{n}(m)\right\vert \right)  \}\\
&  \geq(1-p)(1-\varepsilon). \label{successprobability}%
\end{align}
The last inequality follows from (\ref{eq:bnd-succ-prob}). The inequality in
(\ref{successprobability}) proves that the success probability need not
converge to zero in the limit of many channel uses for a rate larger than the
upper bound on the classical capacity from (\ref{upperboundGV}) under a mean
photon number constraint on each codeword in the codebook. A very similar
argument proves that the strong converse need not hold for the classical
capacity of the additive noise channel under only a mean photon number constraint.

\subsection{No strong converse with pure-state codewords}

In this section, we show that the the classical capacity for the thermal noise
and the additive noise channels need not obey a strong converse when only a
mean photon number constraint is imposed, even when restricting to pure-state
codewords. The argument is again similar to that in
{\cite{StrongConversePureLoss}. }This result is demonstrated by considering
the rate to be larger than the upper bound in (\ref{eq:KS-upper-bound}) for
the thermal noise channel.

We follow the arguments in {\cite{StrongConversePureLoss} and make use of an
ancillary single photon to \emph{purify} the mixed-state codewords in
(\ref{eq:mixed-state-codewords}) to be as follows:}%
\begin{equation}
|\gamma_{p}(m)\rangle\equiv\sqrt{(1-p)}\left\vert \alpha^{n}(m)\right\rangle
|0\rangle+\sqrt{p}|0\rangle^{\otimes n}|1\rangle.
\nonumber\label{codewordmixed}%
\end{equation}
{This additional mode has a negligible effect on the code parameters.}

Following {\cite{StrongConversePureLoss}, we can now show that the average
number of photons in each codeword above is equal to}%
\[
{\operatorname{Tr}\left\{  \left(  \frac{1}{n+1}\sum_{i=1}^{n+1}\hat{a}%
_{i}^{\dagger}\hat{a}_{i}\right)  |\gamma_{p}(m)\rangle\langle\gamma
_{p}(m)|\right\}  =(1-p)\frac{nP}{n+1}+p\frac{1}{n+1}.}%
\]
{Thus, we can set $p$ and $n$ such that the mean number of photons is equal to
$N_{S}$ (mean photon number constraint). It now follows by using the argument
in (\ref{successprobability}) that the success probability of correctly
decoding the message is larger than $(1-p)(1-\varepsilon)$ (the receiver
simply traces over the last mode and performs the POVM }$\left\{  \Lambda
_{m}\right\}  $){. This proves that a strong converse need not hold for the
classical capacity of the thermal channel under a mean photon number
constraint, even when restricting to pure-state codewords. Similar arguments
can be used to show a similar result for the additive noise channel.}

\section{Strong converse rates under a maximum photon number constraint}

\label{SCT}

In light of the results in the previous section, we can only hope that a
strong converse theorem holds under some alternative photon number constraint.
Let $\rho_{m}$ denote a codeword of any code that we wish to transmit through
the thermal noise channel $\mathcal{E}_{\eta,N_{B}}$ with transmissivity
$\eta$ and the noise photon number $N_{B}$. Following
\cite{StrongConversePureLoss}, we impose a \emph{maximum photon-number
constraint}, by demanding that the average code-density operator $\frac{1}%
{M}\sum_{m}\rho_{m}$ ($M$ is the total number of messages) has a large shadow
onto a subspace with photon number no larger than some fixed amount $nN_{S}$.
In more detail, we define a photon number cutoff projector $\Pi_{L}$
projecting onto a subspace of $n$ bosonic modes such that the total photon
number is no larger than $L$:
\begin{equation}
\Pi_{L}\equiv\sum_{a_{1},\ldots,a_{n}:\sum_{i}a_{i}\leq L}|a_{1}\rangle\langle
a_{1}|\otimes\ldots\otimes|a_{n}\rangle\langle a_{n}|,
\end{equation}
where $|a_{i}\rangle$ is a photon number state of photon number $a_{i}$. We
demand that the following maximum photon number constraint is satisfied%
\begin{equation}
\frac{1}{M}\sum_{m}\text{Tr}\left\{  \Pi_{\left\lceil nN_{S}\right\rceil }%
\rho_{m}\right\}  \geq1-\delta_{1}(n), \label{Constraint}%
\end{equation}
where $\delta_{1}(n)$ is a function that decreases to zero as $n$ increases.

A useful bound for us is that the rank of the photon number cutoff projector
$\Pi_{\lceil nN_{S}\rceil}$ cannot be any larger than $2^{n[g(N_{S})+\delta]}$
and $\delta\geq\frac{1}{n}(\log_2 e+\log_2(1+\frac{1}{N_{S}}))$ (Lemma 3 in
Ref.~\cite{StrongConversePureLoss}), i.e.,%

\begin{equation}
\text{Tr}\left\{  \Pi_{\lceil nN_{S}\rceil}\right\}  \leq2^{n[g(N_{S}%
)+\delta]}. \label{Rank}%
\end{equation}
The constant $\delta$ can be chosen to be an arbitrarily small positive
constant by taking $n$ large enough.

In what follows, we prove that several previously known upper bounds
\cite{GGLMS04,GLMS04,KS3,PracticalPurposes} on the classical capacity of the
thermal and additive noise channels are actually strong converse rates.

\subsection{Koenig-Smith bound is a strong converse rate for the thermal
channel}

\begin{theorem}
[(\ref{eq:KS-upper-bound}) is a strong converse rate for $\mathcal{E}%
_{\eta,N_{B}}$]\label{TheoremI} For the thermal noise channel $\mathcal{E}%
_{\eta,N_{B}}$, the average success probability of any code satisfying
(\ref{Constraint}) is bounded as follows:%
\[
\frac{1}{M}\sum_{m}\operatorname{Tr}\{\Lambda_{m}\mathcal{E}_{\eta,N_{B}%
}^{\otimes n}(\rho_{m})\}\leq2^{-n(R-g\left(  \eta N_{S}/[(1-\eta
)N_{B}+1]\right)  -\delta_{2}-\delta_{3})}+2\sqrt{\delta_{1}(n)+\delta
_{4}(n)+2\sqrt{\delta_{1}(n)}},
\]
where $\mathcal{E}_{\eta,N_{B}}^{\otimes n}$ denotes $n$ instances of the
thermal channel, $\delta_{1}(n)$ is defined in (\ref{Constraint}), $\delta
_{2},\delta_{3}$ are arbitrarily small positive constants, and $\delta_{4}(n)$
is a function decreasing to zero as $n$ increases. Thus, for any rate
$R>g\left(  \eta N_{S}/[(1-\eta)N_{B}+1]\right)  +\delta_{2}+\delta_{3}$,
(note that we can pick $\delta_{2}$ and $\delta_{3}$ small enough) the success
probability of any family of codes satisfying (\ref{Constraint}) decreases to
zero in the limit of large $n$.
\end{theorem}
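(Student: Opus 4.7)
The plan is to reduce the thermal-channel problem to one about the pure-loss bosonic channel via the decomposition in (\ref{eq:compositionII}), and then execute a noiseless-qubit-style argument of the kind reviewed just above. Writing $\mathcal{E}_{\eta,N_{B}}^{\otimes n} = \mathcal{A}_{G_{1}}^{\otimes n} \circ \mathcal{E}_{\eta_{1},0}^{\otimes n}$ with $G_{1}=(1-\eta)N_{B}+1$ and $\eta_{1}=\eta/G_{1}$, I would push the amplifier into the decoding POVM via its Heisenberg-picture adjoint. Setting $\Lambda'_{m} \equiv \mathcal{A}_{G_{1}}^{\dagger\,\otimes n}(\Lambda_{m})$, the family $\{\Lambda'_{m}\}$ is still a valid sub-POVM because the adjoint of any trace-preserving map is unital and $\sum_{m}\Lambda_{m} \leq I$. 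The success probability thereby reduces to $\frac{1}{M}\sum_{m}\text{Tr}\{\Lambda'_{m}\,\mathcal{E}_{\eta_{1},0}^{\otimes n}(\rho_{m})\}$, and the target rate $g(\eta N_{S}/[(1-\eta)N_{B}+1]) = g(\eta_{1}N_{S})$ is exactly the quantity one expects from the analysis of a pure-loss channel with mean-photon input $N_{S}$.

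Next, I would use the input maximum photon number constraint (\ref{Constraint}) together with the gentle operator lemma to replace the averaged code state, up to an $O(\sqrt{\delta_{1}(n)})$ trace-distance error, by its projection $\Pi_{\lceil nN_{S}\rceil}\bar{\rho}\,\Pi_{\lceil nN_{S}\rceil}$, where $\bar{\rho} = M^{-1}\sum_{m}\rho_{m}$. Since a pure-loss channel splits each incoming photon independently with Bernoulli($\eta_{1}$) survival probability, a Chernoff-type concentration bound ensures that if the input photon number is at most $nN_{S}$, then with probability at least $1-\delta_{4}(n)$ the output photon number does not exceed $L \equiv \lceil n(\eta_{1}N_{S}+\delta_{3})\rceil$ for any fixed $\delta_{3}>0$. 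A second application of the gentle operator lemma then lets me sandwich $\mathcal{E}_{\eta_{1},0}^{\otimes n}(\Pi_{\lceil nN_{S}\rceil}\bar{\rho}\,\Pi_{\lceil nN_{S}\rceil})$ between output cutoff projectors $\Pi_{L}$, and the compounding of these two trace-distance losses is what produces the nested-square-root term $2\sqrt{\delta_{1}(n)+\delta_{4}(n)+2\sqrt{\delta_{1}(n)}}$ in the bound.

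With both cutoffs in place, the success probability is controlled, up to those additive errors, by $\frac{1}{M}\sum_{m}\text{Tr}\{\Lambda'_{m}\,\Pi_{L}\,\sigma_{m}\,\Pi_{L}\}$, where $\sigma_{m}$ denotes the truncated, loss-propagated codeword. Since $\Pi_{L}\sigma_{m}\Pi_{L} \leq \Pi_{L}$ as operators (using $\|\sigma_{m}\|_{\infty} \leq 1$) and $\sum_{m}\Lambda'_{m} \leq I$, this expression is at most $\frac{1}{M}\text{Tr}\{\Pi_{L}\}$. The rank estimate (\ref{Rank}) applied at the threshold $L$ gives $\text{Tr}\{\Pi_{L}\} \leq 2^{n(g(\eta_{1}N_{S})+\delta_{2})}$ for any small $\delta_{2}>0$ and $n$ large enough, which together with $M=2^{nR}$ produces the leading exponential $2^{-n(R-g(\eta_{1}N_{S})-\delta_{2}-\delta_{3})}$ required by the theorem.

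The main technical obstacle is the output concentration step: the constraint (\ref{Constraint}) pins down only the averaged shadow of $\bar{\rho}$ onto $\Pi_{\lceil nN_{S}\rceil}$, so I must transfer this to a statement about $\mathcal{E}_{\eta_{1},0}^{\otimes n}(\bar{\rho})$ and then control a sum of Bernoulli attenuation events by a concentration inequality that is uniform over all ways the $n$-mode input photons can be distributed among the modes. A subsidiary difficulty is carefully bookkeeping the two gentle-measurement errors, each of which enters through an inequality of the form $\|\rho - \Pi\rho\Pi\|_{1} \leq 2\sqrt{1-\text{Tr}\{\Pi\rho\}}$; chaining the two smoothing steps together, with the inner error itself sitting under a square root, is exactly what produces the precise nested structure of the error term stated in the theorem.
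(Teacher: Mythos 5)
Your proposal follows essentially the same route as the paper's proof: the same decomposition of the thermal channel into a pure-loss channel followed by an amplifier with the amplifier absorbed into the POVM through its unital adjoint, the same output photon-number truncation justified by the maximum photon-number constraint together with a concentration/gentle-measurement argument (the paper invokes Lemma~4 of \cite{StrongConversePureLoss} for precisely the Bernoulli-thinning bound you sketch, yielding the same nested error term), and the same final counting step via $\left\Vert \cdot \right\Vert_{\infty}\leq 1$, unitality of the adjoint, and the rank bound (\ref{Rank}). The argument and the resulting bound match the paper's.
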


\begin{proof}
We can consider the thermal noise channel $\mathcal{E}_{\eta,N_{B}}$ as a
cascade of a pure-loss bosonic channel $\mathcal{E}_{\eta_{1},0}$ followed by
an amplifier channel $\mathcal{A}_{G_{1}}$ (recall the discussion in
Section~\ref{sec:decompositions}):%
\begin{equation}
\mathcal{E}_{\eta,N_{B}}(\rho)=\left(  \mathcal{A}_{G_{1}}\circ\mathcal{E}%
_{\eta_{1},0}\right)  (\rho), \label{compositionII}%
\end{equation}
where the gain of the amplifier channel is $G_{1}=(1-\eta)N_{B}+1$ and the
pure-loss bosonic channel has transmissivity ${\eta}_{1}=\eta/G_{1}$. Let
$N_{S}^{\prime}\equiv N_{S}/[(1-\eta)N_{B}+1]$.

The average success probability of correctly decoding any code satisfying
(\ref{Constraint}) can then be written with the above decomposition rule and 
bounded as%
\begin{align}
&  \frac{1}{M}\sum_{m}\text{Tr}\{\Lambda_{m}\mathcal{E}_{\eta,N_{B}}^{\otimes
n}(\rho_{m})\}\nonumber\\
&  =\frac{1}{M}\sum_{m}\text{Tr}\{\Lambda_{m}(\mathcal{A}_{{G}_{1}}^{\otimes
n}\circ\mathcal{E}_{\eta_{1},0}^{\otimes n})(\rho_{m})\}\nonumber\\
&  =\frac{1}{M}\sum_{m}\text{Tr}\left\{  \left(  \mathcal{A}_{{G}_{1}%
}^{\otimes n}\right)  ^{\dag}\left(  \Lambda_{m}\right)  \mathcal{E}_{\eta
_{1},0}^{\otimes n}(\rho_{m})\right\} \nonumber\\
&  \leq\frac{1}{M}\sum_{m}\text{Tr}\left\{  \left(  \mathcal{A}_{{G}_{1}%
}^{\otimes n}\right)  ^{\dag}\left(  \Lambda_{m}\right)  \Pi_{\lceil n(\eta
N_{s}^{\prime}+\delta_{2})\rceil}\mathcal{E}_{\eta_{1},0}^{\otimes n}(\rho
_{m})\Pi_{\lceil n(\eta N_{s}^{\prime}+\delta_{2})\rceil}\right\} \nonumber\\
&  \ \ \ \ \ \ +\frac{1}{M}\sum_{m}\left\Vert\mathcal{E}_{\eta_{1},0}^{\otimes n}%
(\rho_{m})-\Pi_{\lceil n(\eta N_{s}^{\prime}+\delta_{2})\rceil}\mathcal{E}%
_{\eta_{1},0}^{\otimes n}(\rho_{m})\Pi_{\lceil n(\eta N_{s}^{\prime}%
+\delta_{2})\rceil}\right\Vert_{1}\nonumber\\
&  \leq\frac{1}{M}\sum_{m}\text{Tr}\left\{  \left(  \mathcal{A}_{{G}_{1}%
}^{\otimes n}\right)  ^{\dag}\left(  \Lambda_{m}\right)  (\Pi_{\lceil n(\eta
N_{s}^{\prime}+\delta_{2})\rceil}\mathcal{E}_{\eta_{1},0}^{\otimes n}(\rho
_{m})\Pi_{\lceil n(\eta N_{s}^{\prime}+\delta_{2})\rceil})\right\} \nonumber\\
&  \ \ \ \ \ \ +2\sqrt{\delta_{1}(n)+\delta_{4}(n)+2\sqrt{\delta_{1}(n)}}.
\end{align}
The first equality is obtained by using the decomposition rule stated above.
The second equality follows by defining $\left(  \mathcal{A}_{{G}_{1}%
}^{\otimes n}\right)  ^{\dag}$ as the adjoint of $\mathcal{A}_{{G}_{1}%
}^{\otimes n}$. The first inequality is a special case of the inequality
\begin{equation}
\operatorname{Tr}\{\Lambda\sigma\}\leq\text{Tr}\{\Lambda\rho\}+||\rho
-\sigma||_{1}, \label{ineq1}%
\end{equation}
which holds for $0\leq\Lambda\leq I$, $\rho,\sigma\geq0$, and
$\operatorname{Tr}\{\rho\},\operatorname{Tr}\{\sigma\}\leq1$. The second
inequality follows from a variation of the Gentle Measurement
Lemma~\cite{ON07,W99} for ensembles, which states that $\sum_{x}p_{X}%
(x)||\rho_{x}-\sqrt{\Lambda}\rho_{x}\sqrt{\Lambda}||_{1}\leq2\sqrt{\epsilon}$
for an ensemble $\{p_{X}(x),\rho_{x}\}$ where $\sum_{x}p_{X}\operatorname{Tr}%
\{\Lambda\rho_{x}\}\geq1-\varepsilon$ and $0\leq\varepsilon\leq1$. It also follows
from an application of Lemma~4 of \cite{StrongConversePureLoss}, with
$\delta_4(n)$ chosen as given there.

We now focus on the first term in the above expression to obtain the upper
bound in the statement of the theorem:%
\begin{align}
&  \frac{1}{M}\sum_{m}\operatorname{Tr}\{\left(  \mathcal{A}_{{G}_{1}%
}^{\otimes n}\right)  ^{\dag}\left(  \Lambda_{m}\right)  (\Pi_{\lceil n(\eta
N_{s}^{\prime}+\delta_{2})\rceil}\mathcal{E}_{\eta_{1},0}^{\otimes n}(\rho
_{m})\Pi_{\lceil n(\eta N_{s}^{\prime}+\delta_{2})\rceil})\}\\
&  \leq\frac{1}{M}\sum_{m}\operatorname{Tr}\{\Pi_{\lceil n(\eta N_{s}^{\prime
}+\delta_{2})\rceil}\left(  \mathcal{A}_{{G}_{1}}^{\otimes n}\right)  ^{\dag
}(\Lambda_{m})\Pi_{\lceil n(\eta N_{s}^{\prime}+\delta_{2})\rceil
}\}\nonumber\\
&  =M^{-1}\operatorname{Tr}\{\Pi_{\lceil n(\eta N_{s}^{\prime}+\delta
_{2})\rceil}\}\nonumber\\
&  \leq2^{-n(R-g(\eta N_{S}^{\prime})-\delta_{2}-\delta_{3})}.\nonumber
\end{align}
The first inequality follows since $||\mathcal{E}_{\eta_{1},0}^{\otimes
n}(\rho_{m})||_{\infty}\leq1$. The first equality is a consequence of the fact
that $\sum_{m}\Lambda_{m}=I$ for any POVM $\{\Lambda_{m}\}$ and that the
adjoint of any CPTP map is unital. The last equality
follows from (\ref{Rank}) and from the fact that the rate $R=(\log_{2}M)/n$.
Thus, we arrive at the statement of the theorem---if the rate $R>g(\eta
N_{s}^{\prime})$, we can choose the constants $\delta_{2},\delta_{3}$ to be
arbitrarily small such that $R>g(\eta N_{s}^{\prime})+\delta_{2}+\delta_{3}$,
and the success probability decreases to zero in the limit of $n\rightarrow
\infty$.
\end{proof}

\subsection{Koenig-Smith-like bound is a strong converse rate for the additive
noise channel}

Recall that the additive noise channel $\mathcal{N}_{\bar{n}}$ can be realized
as a pure-loss bosonic channel $\mathcal{E}_{\eta_{2},0}$ with transmissivity
$\eta_{2}=1/(\bar{n}+1)$ followed by an amplifier channel $\mathcal{A}_{G_{2}%
}$ with gain $G_{2}=(\bar{n}+1)$ (see Appendix~{\ref{symplectic}} for
details), i.e.,%
\begin{equation}
\mathcal{N}_{\bar{n}}=\left(  \mathcal{A}_{G_{2}}\circ\mathcal{E}_{\eta_{2}%
,0}\right)  (\rho)\equiv\mathcal{A}_{G_{2}}\left(  \mathcal{E}_{\eta_{2}%
,0}(\rho)\right)  \label{additivecomposition}\, .
\end{equation}
Then it follows from Theorem~\ref{TheoremI}, by making the replacement
$(1-\eta)N_{B}\rightarrow\bar{n}$ in the thermal noise channel results, that
the upper bound in (\ref{additiveUB1}) is a strong converse rate for the
additive noise channel. That is, the average success probability of correctly
decoding any code under a maximum photon number constraint decreases to zero
as $n$ becomes large for any rate $R>g\left(  N_{S}/(\bar{n}+1)\right)  $.

\subsection{Giovannetti \textit{et al}.~bound is a strong converse rate for
the thermal channel}

We now prove that the upper bound in (\ref{upperboundGV}) corresponds to a
strong converse rate for the thermal channel under a maximum photon number
constraint. In order to prove that, it is essential to show that if most of
the probability mass of the input state is in a subspace with photon number no
larger than $nN_{S}$, then the most of the probability mass of the thermal
channel output is in a subspace with photon number no larger than $n(\eta
N_{S}+\left(  1-\eta\right)  N_{B})$.

\begin{lemma}
\label{lem:photon-num-lemma}Let $\rho^{(n)}$ denote a density operator on $n$
modes satisfying%
\[
\operatorname{Tr}\{\Pi_{\lceil nN_{S}\rceil}\rho^{(n)}\}\geq1-\delta_{1}(n),
\]
where $\delta_{1}(n)$ is a function of $n$ decreasing to zero as $n$
increases. Then%
\[
\operatorname{Tr}\{\Pi_{\lceil n(\eta N_{S}+(1-\eta)N_{B})+\delta_{5})\rceil
}\mathcal{E}_{\eta,N_{B}}^{\otimes n}(\rho^{(n)})\}\geq1-\delta_{1}%
(n)-2\sqrt{\delta_{1}(n)}-\delta_{6}(n),
\]
where $\mathcal{E}_{\eta,N_{B}}^{\otimes n}$ represents $n$ instances of the
thermal noise channel, $\delta_{5}$ is an arbitrarily small positive constant,
and $\delta_{6}(n)$ is a function of $n$ decreasing to zero as $n\rightarrow
\infty$.
\end{lemma}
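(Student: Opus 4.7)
The strategy is to reduce to a bounded-photon input by gentle measurement, then push the resulting state through the structural decomposition $\mathcal{E}_{\eta,N_{B}}=\mathcal{A}_{G_{1}}\!\circ\!\mathcal{E}_{\eta_{1},0}$ of (\ref{eq:compositionII}), controlling the output photon number at each of the two stages. The key algebraic fact making the bookkeeping line up is that $\eta_{1}G_{1}=\eta$ and $G_{1}-1=(1-\eta)N_{B}$, so that an input of $nN_{S}$ photons becomes $\eta_{1}nN_{S}$ after the loss stage and then $G_{1}\eta_{1}nN_{S}+n(G_{1}-1)=n(\eta N_{S}+(1-\eta)N_{B})$ after the amplifier stage, exactly matching the target cutoff in the statement.

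First, I would use the hypothesis $\operatorname{Tr}\{\Pi_{\lceil nN_{S}\rceil}\rho^{(n)}\}\geq 1-\delta_{1}(n)$ together with the Gentle Measurement Lemma to replace $\rho^{(n)}$ by the (normalized) projected state $\tilde{\rho}^{(n)}=\Pi_{\lceil nN_{S}\rceil}\rho^{(n)}\Pi_{\lceil nN_{S}\rceil}/\operatorname{Tr}\{\Pi_{\lceil nN_{S}\rceil}\rho^{(n)}\}$, at a cost of at most $2\sqrt{\delta_{1}(n)}$ in trace distance (this is the origin of the $2\sqrt{\delta_{1}(n)}$ term in the bound). By construction $\tilde{\rho}^{(n)}$ is supported on the eigenspace of the total number operator with eigenvalues $\leq\lceil nN_{S}\rceil$.

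Next I would apply Lemma~4 of \cite{StrongConversePureLoss} directly to the intermediate state $\mathcal{E}_{\eta_{1},0}^{\otimes n}(\tilde{\rho}^{(n)})$: since the pure-loss channel $\mathcal{E}_{\eta_{1},0}$ attenuates expected photon numbers by $\eta_{1}$, that lemma gives a $\delta_{4}(n)$-trace-norm approximation by the projection $\Pi_{\lceil n(\eta_{1}N_{S}+\delta')\rceil}\mathcal{E}_{\eta_{1},0}^{\otimes n}(\tilde{\rho}^{(n)})\Pi_{\lceil n(\eta_{1}N_{S}+\delta')\rceil}$, for any fixed $\delta'>0$ and a function $\delta_{4}(n)\to 0$. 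I would then prove the analogous statement for the amplifier stage: if a state on $n$ modes has total photon number $\leq L$ with high probability, then $\mathcal{A}_{G_{1}}^{\otimes n}$ applied to it has total photon number $\leq G_{1}L+n(G_{1}-1)+n\delta''$ with probability tending to one. The mean calculation is immediate from the Heisenberg action of the amplifier on $\hat{a}^{\dagger}\hat{a}$; control of the fluctuations can be obtained by a Chebyshev-type argument applied to the sum $\sum_{i}\hat{a}'^{\dagger}_{i}\hat{a}'_{i}$ at the amplifier output (using that, conditioned on definite input photon numbers, the per-mode output photon number has bounded variance of order $G_{1}^{2}$ times the input occupation plus a constant). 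Since the decomposition is $\mathcal{A}_{G_{1}}\circ\mathcal{E}_{\eta_{1},0}$, chaining the loss bound and the amplifier bound (converting each trace-distance error into a cutoff-probability error via $\operatorname{Tr}\{\Pi\sigma\}\leq\operatorname{Tr}\{\Pi\rho\}+\|\rho-\sigma\|_{1}$) produces the desired concentration around $n(\eta N_{S}+(1-\eta)N_{B})$, with a residual $\delta_{6}(n)$ absorbing $\delta_{4}(n)$ from the loss stage and the Chebyshev tail from the amplifier stage.

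The main obstacle is the amplifier concentration step: unlike the pure-loss channel used in Lemma~4 of \cite{StrongConversePureLoss}, the amplifier increases photon number and injects noise, so one cannot directly reuse that lemma. I expect the cleanest route is to exploit the symplectic dilation of $\mathcal{A}_{G_{1}}$ as a two-mode-squeezing interaction with the vacuum, which gives a closed-form expression for the mean and variance of the output number operator in terms of the input number, and then to invoke Chebyshev on the i.i.d.\ ancilla modes. Once that lemma is in place, the remaining steps are bookkeeping: choosing $\delta_{5}$ to dominate both $\delta'$ (from the loss stage) and $\delta''$ (from the amplifier stage), and collecting the three small quantities $\delta_{1}(n)$, $2\sqrt{\delta_{1}(n)}$, and $\delta_{6}(n)$ on the right-hand side.
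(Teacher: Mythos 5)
Your plan is correct in spirit but takes a genuinely different route from the paper's. You use the amplifier decomposition $\mathcal{A}_{G_1}\circ\mathcal{E}_{\eta_1,0}$ of (\ref{eq:compositionII}), which forces a two-stage concentration argument (pure loss, then amplifier) and in particular requires proving a new concentration lemma for the amplifier---one that is not in the paper or in \cite{StrongConversePureLoss}---and which you correctly flag as the main obstacle. The paper instead uses the additive-noise decomposition $\mathcal{N}_{(1-\eta)N_B}\circ\mathcal{E}_{\eta,0}$ of (\ref{compositionI}): after the gentle-measurement step (identical to yours), it notes that the output cutoff projector is diagonal in the number basis and the channel is phase-covariant, so it suffices to take the projected input diagonal; it then writes down the closed-form conditional distribution $p(l|k)$ of the thermal channel's output photon number on a number state $|k\rangle$ (a binomial from the loss stage convolved with the known additive-noise kernel $\lambda_l$), observes that it has mean $\eta k+(1-\eta)N_B$, finite second moment, and exponential tails, and applies a single Chebyshev (or truncation/Chernoff) bound to the independent sum over modes. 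That choice of decomposition pays off: it yields the closed-form $p(l|k)$ in one shot, avoiding both the reuse of Lemma~4 of \cite{StrongConversePureLoss} for the loss stage and your extra amplifier lemma, and delivers the single remainder $\delta_6(n)$ directly. Your arithmetic $\eta_1 G_1=\eta$ and $G_1-1=(1-\eta)N_B$ is right, and your amplifier lemma is provable along the lines you sketch (phase covariance to reduce to number-diagonal states, then Chebyshev on the negative-binomial-type conditional output distribution with variance linear in the input occupation), so your proposal is viable; just note that your amplifier concentration step implicitly needs the same phase-covariance reduction to diagonal inputs that the paper invokes, since you quote per-mode variance bounds ``conditioned on definite input photon numbers'' without first justifying that restriction.
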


\begin{proof}
Recall the structural decomposition of the thermal noise channel from
(\ref{compositionI}):%
\[
\mathcal{E}_{\eta,N_{B}}(\rho)=\left(  \mathcal{N}_{(1-\eta)N_{B}}%
\circ\mathcal{E}_{\eta,0}\right)  (\rho).
\]
This decomposition states that a thermal noise channel with transmissivity
$\eta$\ and noise power $N_{B}$ can be realized as a concatenation of a
pure-loss channel of transmissivity $\eta$ followed by a classical noise
channel $\mathcal{N}_{\left(  1-\eta\right)  N_{B}}$. Thus, a photon number
state $\left\vert k\right\rangle \left\langle k\right\vert $\ input to the
thermal noise channel leads to an output of the following form:%
\begin{equation}
\mathcal{E}_{\eta,N_{B}}\left(  \left\vert k\right\rangle \left\langle
k\right\vert \right)  =\sum_{m=0}^{k}p_{m}\mathcal{N}_{\left(  1-\eta\right)
N_{B}}\left(  \left\vert m\right\rangle \left\langle m\right\vert \right)  ,
\label{classicaltothermal}%
\end{equation}
where%
\[
p_{m}=\binom{k}{m}\eta^{m}\left(  1-\eta\right)  ^{k-m}.
\]
The classical noise channel has the following action on a photon number
state~\cite{GGLMS04,Caves}:%
\[
\mathcal{N}_{\left(  1-\eta\right)  N_{B}}\left(  \left\vert m\right\rangle
\left\langle m\right\vert \right)  =\sum_{l=0}^{\infty}\lambda_{l}\left\vert
l\right\rangle \left\langle l\right\vert ,
\]
where%
\begin{equation}
\lambda_{l}=\sum_{j=0}^{\min\left(  l,m\right)  }\binom{l}{j}\binom{m}{j}%
\frac{\left(  \left(  1-\eta\right)  N_{B}\right)  ^{m+l-2j}}{\left(
1+\left(  1-\eta\right)  N_{B}\right)  ^{m+l+1}}. \label{eq:special-dist}%
\end{equation}
Important properties of the distribution $\lambda_{l}$ are that it decays
exponentially to zero as $l\rightarrow\infty$ and has finite second moment. It
follows from (\ref{classicaltothermal}) that%
\[
\mathcal{E}_{\eta,N_{B}}\left(  \left\vert k\right\rangle \left\langle
k\right\vert \right)  =\sum_{l=0}^{\infty}\left[  \sum_{m=0}^{k}\sum
_{j=0}^{\min\left(  l,m\right)  }\binom{k}{m}\eta^{m}\left(  1-\eta\right)
^{k-m}\binom{l}{j}\binom{m}{j}\frac{\left(  \left(  1-\eta\right)
N_{B}\right)  ^{m+l-2j}}{\left(  1+\left(  1-\eta\right)  N_{B}\right)
^{m+l+1}}\right]  \left\vert l\right\rangle \left\langle l\right\vert .
\]
The eigenvalues above represent a distribution over photon number states at
the output of the thermal noise channel $\mathcal{E}_{\eta,N_{B}}$, which we
can write as a probability distribution over $l$ given the input~$k$:%
\begin{equation}
p\left(  l|k\right)  =\sum_{m=0}^{k}\sum_{j=0}^{\min\left(  l,m\right)
}\binom{k}{m}\eta^{m}\left(  1-\eta\right)  ^{k-m}\binom{l}{j}\binom{m}%
{j}\frac{\left(  \left(  1-\eta\right)  N_{B}\right)  ^{m+l-2j}}{\left(
1+\left(  1-\eta\right)  N_{B}\right)  ^{m+l+1}}. \label{eq:conditional-dist}%
\end{equation}
The above probability distribution has its mean equal to $\eta k+\left(
1-\eta\right)  N_{B}$. The reason is that the mean photon number of the input
is equal to $k$, and the mean photon number of the output is equal to a linear
combination of the input mean photon numbers. Furthermore, it inherits from
the distribution in (\ref{eq:special-dist}) the properties of having finite
second moment and an exponential decay to zero as $l\rightarrow\infty$.

Supposing that the input state satisfies the maximum photon-number constraint
in (\ref{Constraint}), we now observe that%
\begin{align}
&  \text{Tr}\left\{  \left(  \Pi_{\left\lceil n\left(  \eta N_{S}+\left(
1-\eta\right)  N_{B}+\delta_{5}\right)  \right\rceil }\right)  \mathcal{E}%
_{\eta,N_{B}}^{\otimes n}\left(  \rho^{\left(  n\right)  }\right)  \right\}
\nonumber\\
&  \geq\text{Tr}\left\{  \left(  \Pi_{\left\lceil n\left(  \eta N_{S}+\left(
1-\eta\right)  N_{B}+\delta_{5}\right)  \right\rceil }\right)  \mathcal{E}%
_{\eta,N_{B}}^{\otimes n}\left(  \Pi_{\left\lceil nN_{S}\right\rceil }%
\rho^{\left(  n\right)  }\Pi_{\left\lceil nN_{S}\right\rceil }\right)
\right\}  -2\sqrt{\delta_{1}} \label{eq:photon-counting-output}%
\end{align}

The inequality follows from the Gentle Measurement Lemma~\cite{ON07,W99}.
Since there is photodetection at the output (i.e., the projector
$\Pi_{\left\lceil n\left(  \eta N_{S}+\left(  1-\eta\right)  N_{B}+\delta
_{5}\right)  \right\rceil }$ is diagonal in the number basis), it suffices for
us to consider the input $\Pi_{\left\lceil nN_{S}\right\rceil }\rho^{\left(
n\right)  }\Pi_{\left\lceil nN_{S}\right\rceil }$ to be diagonal in the
photon-number basis, and we write this as%
\[
\rho^{\left(  n\right)  }=\sum_{a^{n}:\sum_{i}a_{i}\leq\lceil nN_{S}\rceil
}p\left(  a^{n}\right)  \left\vert a^{n}\right\rangle \left\langle
a^{n}\right\vert ,
\]
where $\left\vert a^{n}\right\rangle $ represents strings of photon number
states. Continuing, we find that (\ref{eq:photon-counting-output}) is equal to%
\begin{align}
&  \sum_{a^{n}:\sum_{i}a_{i}\leq\left\lceil nN_{S}\right\rceil }p\left(
a^{n}\right)  \text{Tr}\left\{  \left(  \Pi_{\left\lceil n\left(  \eta
N_{S}+\left(  1-\eta\right)  N_{B}+\delta_{5}\right)  \right\rceil }\right)
\mathcal{E}_{\eta,N_{B}}^{\otimes n}\left(  \left\vert a^{n}\right\rangle
\left\langle a^{n}\right\vert \right)  \right\}  -2\sqrt{\delta_{1}%
}\nonumber\\
&  =\sum_{a^{n}:\sum_{i}a_{i}\leq\left\lceil nN_{S}\right\rceil }p\left(
a^{n}\right)  \sum_{l^{n}:\sum_{i}l_{i}\leq\left\lceil n\left(  \eta
N_{S}+\left(  1-\eta\right)  N_{B}+\delta_{5}\right)  \right\rceil }p\left(
l^{n}|a^{n}\right)  -2\sqrt{\delta_{1}}, \label{analysisterm}%
\end{align}
where the distribution $p\left(  l^{n}|a^{n}\right)  \equiv\prod
\limits_{i=1}^{n}p\left(  l_{i}|a_{i}\right)  $ and each $p\left(  l_{i}%
|a_{i}\right)  $ is defined from (\ref{eq:conditional-dist}).

In order to obtain a lower bound on the expression in (\ref{analysisterm}), we
analyze the term
\begin{equation}
\sum_{l^{n}:\sum_{i}l_{i}\leq\left\lceil n\left(  \eta N_{S}+\left(
1-\eta\right)  N_{B}+\delta_{5}\right)  \right\rceil }p\left(  l^{n}%
|a^{n}\right)  \label{conditional}%
\end{equation}
on its own under the assumption that $\sum_{i}a_{i}\leq\left\lceil
nN_{S}\right\rceil $. Let $L_{i}|a_{i}$ denote a conditional random variable
with distribution $p\left(  l_{i}|a_{i}\right)  $, and let $\overline{L^{n}%
}|a^{n}$ denote the sum random variable:%
\[
\overline{L^{n}}|a^{n}\equiv\sum_{i}L_{i}|a_{i},
\]
so that%
\begin{align}
\sum_{l^{n}:\sum_{i}l_{i}\leq\left\lceil n\left(  \eta N_{S}+\left(
1-\eta\right)  N_{B}+\delta_{5}\right)  \right\rceil }p\left(  l^{n}%
|a^{n}\right)   &  =\Pr\left\{  \overline{L^{n}}|a^{n}\leq n\left(  \eta
N_{S}+\left(  1-\eta\right)  N_{B}+\delta_{5}\right)  \right\} \nonumber\\
&  \geq\Pr\left\{  \overline{L^{n}}|a^{n}\leq n\left(  \eta\frac{1}{n}\sum
_{i}a_{i}+\left(  1-\eta\right)  N_{B}+\delta_{5}\right)  \right\}  ,
\label{eq:probability-to-bound}%
\end{align}
where the inequality follows from the constraint $\sum_{i}a_{i}\leq\left\lceil
nN_{S}\right\rceil $. Since%
\[
\mathbb{E}\left\{  L_{i}|a_{i}\right\}  =\eta a_{i}+\left(  1-\eta\right)
N_{B},
\]
it follows that%
\[
\mathbb{E}\left\{  \overline{L^{n}}|a^{n}\right\}  =n\left(  \eta\frac{1}%
{n}\sum_{i}a_{i}+\left(  1-\eta\right)  N_{B}\right)  ,
\]
and so the expression in (\ref{eq:probability-to-bound}) is the probability
that a sum of independent random variables deviates from its mean by no more
than $\delta_{5}$.

There are several ways to proceed with bounding the probability in
(\ref{eq:probability-to-bound}) from below. Since all of the random variables
$L_{i}|a_{i}$ have finite second moment, we can employ the Chebyshev
inequality to bound (\ref{eq:probability-to-bound}) from below by $1-C/n$,
where $C$ is a constant depending on the maximum variance of the $L_{i}$
random variables and the deviation $\delta_{5}$. However, if we would like to
prove that there is a stronger rate of convergence, the fact that the random
variables are unbounded might seem to be problematic. Nevertheless, one could
employ the truncation method detailed in Section~2.1 of \cite{T12}, in which
each random variable $L_{i}|a_{i}$ is split into two parts:%
\begin{align*}
\left(  L_{i}|a_{i}\right)  _{>T}  &  \equiv\left(  L_{i}|a_{i}\right)
\mathcal{I}\left(  \left(  L_{i}|a_{i}\right)  >T\right)  ,\\
\left(  L_{i}|a_{i}\right)  _{\leq T}  &  \equiv\left(  L_{i}|a_{i}\right)
\mathcal{I}\left(  \left(  L_{i}|a_{i}\right)  \leq T\right)  ,
\end{align*}
where $\mathcal{I}\left(  \cdot\right)  $ is the indicator function and $T$ is
a truncation parameter taken to be very large (much larger than $\max_{i}%
a_{i}$, for example). We can then split the sum random variable into two parts
as well:%
\begin{align*}
\overline{L^{n}}|a^{n}  &  =\left(  \overline{L^{n}}|a^{n}\right)
_{>T}+\left(  \overline{L^{n}}|a^{n}\right)  _{\leq T}\\
&  \equiv\sum_{i}\left(  L_{i}|a_{i}\right)  _{>T}+\sum_{i}\left(  L_{i}%
|a_{i}\right)  _{\leq T}.
\end{align*}
We can use the union bound to argue that%
\begin{multline}
\Pr\left\{  \overline{L^{n}}|a^{n}\geq\mathbb{E}\left\{  \overline{L^{n}%
}|a^{n}\right\}  +n\delta_{5}\right\}     \leq\Pr\left\{  \left(
\overline{L^{n}}|a^{n}\right)  _{>T}\geq\mathbb{E}\left\{  \left(
\overline{L^{n}}|a^{n}\right)  _{>T}\right\}  +n\delta_{5}/2\right\} \\
  +\Pr\left\{  \left(  \overline{L^{n}}|a^{n}\right)  _{\leq T}\geq
\mathbb{E}\left\{  \left(  \overline{L^{n}}|a^{n}\right)  _{\leq T}\right\}
+n\delta_{5}/2\right\}  .
\end{multline}
The idea from here is that for a random variable $L_{i}|a_{i}$ with sufficient
decay for large values, we can bound the first probability for $\left(
\overline{L^{n}}|a^{n}\right)  _{>T}$ from above by $\varepsilon/\delta_{5}%
$\ for $\varepsilon$ an arbitrarily small positive constant (made small by
taking $T$ larger)\ by employing the Markov inequality. We then bound the
second probability for $\left(  \overline{L^{n}}|a^{n}\right)  _{\leq T}$
using a Chernoff bound, since these random variables are bounded. This latter
bound has an exponential decay due to the ability to use a Chernoff bound. So,
the argument is just to make $\varepsilon$ arbitrarily small by increasing the
truncation parameter $T$, and for $n$ large enough, exponential convergence to
zero kicks in. We point the reader to Section~2.1 of \cite{T12} for more
details. By using either approach, we arrive at the following bound:%
\[
\sum_{l^{n}:\sum_{i}l_{i}\leq\left\lceil n\left(  \eta N_{S}+\left(
1-\eta\right)  N_{B}+\delta_{5}\right)  \right\rceil }p\left(  l^{n}%
|a^{n}\right)  \geq1-\delta_{6}(n),
\]
where $\delta_{6}(n)$ is a function decreasing to zero as $n\rightarrow\infty$.

Finally, we put this together with (\ref{analysisterm}) to obtain that%
\begin{align*}
&  \operatorname{Tr}\{\Pi_{\lceil n(\eta N_{S}+(1-\eta)N_{B})+\delta
_{5})\rceil}\mathcal{E}_{\eta,N_{B}}^{\otimes n}(\rho^{(n)})\}\\
&  \geq\sum_{a^{n}:\sum_{i}a_{i}\leq\left\lceil nN_{S}\right\rceil }p\left(
a^{n}\right)  \sum_{l^{n}:\sum_{i}l_{i}\leq\left\lceil n\left(  \eta
N_{S}+\left(  1-\eta\right)  N_{B}+\delta_{5}\right)  \right\rceil }p\left(
l^{n}|a^{n}\right)  -2\sqrt{\delta_{1}}\\
&  \geq\left(  1-\delta_{1}\right)  \left(  1-\delta_{6}(n)\right)
-2\sqrt{\delta_{1}}\\
&  \geq1-\delta_{1}-\delta_{6}(n)-2\sqrt{\delta_{1}}.
\end{align*}
\end{proof}

The above lemma can be extended to the additive noise channel by employing the
relation of this channel to the thermal channel (discussed in
Section~\ref{sec:decompositions}). For the additive noise channel
$\mathcal{N}_{\bar{n}}$, it follows that if the input state is in the subspace
with photon number no larger than $nN_{S}$, then the additive noise output is
projected with very high probability onto a subspace with photon number no
larger than $n(N_{S}+\bar{n})$.

We now proceed to prove that the upper bounds in (\ref{upperboundGV}) and
(\ref{additiveUB2})\ are strong converse rates.

\begin{theorem}
[(\ref{upperboundGV}) is a strong converse rate]\label{TheoremII} The average
success probability of any code satisfying (\ref{Constraint}) is bounded as
follows:%
\begin{multline*}
\frac{1}{M}\sum_{m}\text{Tr}\{\Lambda_{m}\mathcal{E}_{\eta,N_{B}}^{\otimes
n}(\rho_{m})\}\\
\leq2^{-n\left[  R-\left[  g\left(  \eta N_{S}+(1-\eta)N_{B}+\delta
_{5}\right)  -\log_{2}\left(  1+2(1-\eta)N_{B}\right)  \right]  +\frac{1}%
{n}\log_{2}\left(  n\right)  -\delta\right]  }\\
+\frac{1}{n}+2\sqrt{\delta_{1}(n)+2\sqrt{\delta_{1}(n)}+\delta_{6}(n)}%
\end{multline*}
where $\mathcal{E}_{\eta,N_{B}}^{\otimes n}$ denotes $n$ instances of the
thermal channel, and $\delta$ is an arbitrarily small positive constant. Thus,
if $R>[g\left(  \eta N_{S}+\left(  1-\eta\right)  N_{B}\right)  -\log
_{2}\left(  1+2(1-\eta)N_{B}\right)  ]$, then we can pick $\delta$ and
$\delta_{5}$ decreasing to zero for large $n$, such that the success
probability of any family of codes satisfying (\ref{Constraint}) decreases to
zero as $n\rightarrow\infty$.
\end{theorem}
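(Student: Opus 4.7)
The plan is to generalize the noiseless-qubit-channel strong-converse argument (reviewed earlier) by using two new ingredients: the photon-number projector $\Pi \equiv \Pi_{\lceil n(\eta N_{S}+(1-\eta)N_{B}+\delta_{5})\rceil}$ will control the effective dimension of the output space, while the R\'enyi collision entropy ($\alpha=2$) of the thermal channel output will control the purity. The crucial observation motivating the choice $\alpha = 2$ is that the minimum output $H_{2}$ formula from the Preliminaries evaluates to exactly $\log_{2}[1+2(1-\eta)N_{B}]$, precisely matching the correction term in the Giovannetti et al.\ bound (\ref{upperboundGV}).

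First, writing $\sigma_{m} \equiv \mathcal{E}_{\eta,N_{B}}^{\otimes n}(\rho_{m})$, I would apply Lemma~\ref{lem:photon-num-lemma} to conclude that $\frac{1}{M}\sum_{m}\operatorname{Tr}\{\Pi \sigma_{m}\} \geq 1 - \delta_{1}(n) - 2\sqrt{\delta_{1}(n)} - \delta_{6}(n)$, then invoke the Gentle Measurement Lemma for ensembles to replace $\sigma_{m}$ by $\Pi \sigma_{m}\Pi$ in the success probability at the cost of an additive $2\sqrt{\delta_{1}(n) + 2\sqrt{\delta_{1}(n)} + \delta_{6}(n)}$. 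Next, for each $m$, I would use the smoothing bound (\ref{Renyismoothing}) with $\alpha = 2$ to find $\tilde{\sigma}_{m} \in \mathcal{B}^{\varepsilon}(\sigma_{m})$ with $\|\tilde{\sigma}_{m}\|_{\infty} \leq \varepsilon^{-1}\, 2^{-H_{2}(\sigma_{m})}$, costing at most $\varepsilon$ per $m$. Then I would bound the main term via the operator inequality $\operatorname{Tr}\{AB\} \leq \|A\|_{\infty}\operatorname{Tr}\{B\}$:
\[
\frac{1}{M}\sum_{m}\operatorname{Tr}\{\Lambda_{m}\Pi\tilde{\sigma}_{m}\Pi\} = \frac{1}{M}\sum_{m}\operatorname{Tr}\{(\Pi\Lambda_{m}\Pi)\tilde{\sigma}_{m}\} \leq \frac{1}{M}\,\max_{m}\|\tilde{\sigma}_{m}\|_{\infty} \sum_{m}\operatorname{Tr}\{\Pi\Lambda_{m}\Pi\},
\]
and the key cancellation $\sum_{m}\Lambda_{m} = I$ together with (\ref{Rank}) (applied with $N_{S}$ replaced by $\eta N_{S}+(1-\eta)N_{B}+\delta_{5}$) yields $\sum_{m}\operatorname{Tr}\{\Pi\Lambda_{m}\Pi\} = \operatorname{Tr}\{\Pi\} \leq 2^{n[g(\eta N_{S}+(1-\eta)N_{B}+\delta_{5})+\delta]}$.

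Plugging in $M = 2^{nR}$, $H_{2}(\sigma_{m}) \geq n\log_{2}[1+2(1-\eta)N_{B}]$, and the choice $\varepsilon = 1/n$ then produces the main exponential term with its $\frac{1}{n}\log_{2}n$ correction and the additive $1/n$ error term. The main technical obstacle will be justifying the purity bound $H_{2}(\sigma_{m}) \geq n\log_{2}[1+2(1-\eta)N_{B}]$ for \emph{arbitrary}, possibly entangled, $n$-mode codewords $\rho_{m}$; this requires the multiplicativity of the minimum output $\alpha=2$ R\'enyi entropy of the thermal channel, which is exactly the content of Giovannetti et al.'s result \cite{GLMS04} cited in the Preliminaries. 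A secondary technical point is ensuring that the Renner--Wolf smoothing \cite{RS} can be carried out independently for each $m$ without disturbing the fixed decoding POVM $\{\Lambda_{m}\}$, which follows since smoothing only affects the states and the POVM cancellation used above is purely operator-algebraic.
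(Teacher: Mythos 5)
Your proposal is correct and follows essentially the same route as the paper's proof: Lemma~\ref{lem:photon-num-lemma} plus the Gentle Measurement Lemma to restrict to the projected output, the Renner--Wolf smoothing inequality~(\ref{Renyismoothing}) at $\alpha=2$ combined with the cyclicity/operator-norm bound and $\sum_m\Lambda_m=I$, the rank bound~(\ref{Rank}), the additivity of the minimum output R\'enyi-$2$ entropy from~\cite{GGLMS04,GLMS04}, and the choice $\varepsilon=1/n$. The only cosmetic difference is that you pick the smoothed states $\tilde{\sigma}_m$ directly via the R\'enyi bound, whereas the paper first writes an infimum over the $\varepsilon$-ball, identifies it with $2^{-H_{\min}^{\varepsilon}}$, and then applies~(\ref{Renyismoothing}).
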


\begin{proof}
Consider the success probability of any code satisfying the maximum
photon-number constraint in (\ref{Constraint}):%
\[
\frac{1}{M}\sum_{m}\text{Tr}\{\Lambda_{m}\mathcal{E}_{\eta,N_{B}}^{\otimes
n}(\rho_{m})\}.
\]
From the assumption that%
\[
\frac{1}{M}\sum_{m}\text{Tr}\left\{  \Pi_{\lceil nN_{S}\rceil}\rho
_{m}\right\}  \geq1-\delta_{1}\left(  n\right)  ,
\]
Lemma~\ref{lem:photon-num-lemma} allows us to conclude that%
\[
\frac{1}{M}\sum_{m}\operatorname{Tr}\{\Pi_{\lceil n(\eta N_{S}+(1-\eta
)N_{B})+\delta_{5})\rceil}\mathcal{E}_{\eta,N_{B}}^{\otimes n}(\rho_{m}%
)\}\geq1-\delta_{1}(n)-2\sqrt{\delta_{1}(n)}-\delta_{6}(n),
\]
where the functions and constants are as given there. Using the Gentle
Measurement Lemma for ensembles \cite{ON07,W99}, we find that%
\begin{align*}
&  \frac{1}{M}\sum_{m}\text{Tr}\{\Lambda_{m}\mathcal{E}_{\eta,N_{B}}^{\otimes
n}(\rho_{m})\}\\
&  \leq\frac{1}{M}\sum_{m}\text{Tr}\{\Lambda_{m}\Pi_{\lceil n(\eta
N_{S}+(1-\eta)N_{B})+\delta_{5})\rceil}\mathcal{E}_{\eta,N_{B}}^{\otimes
n}(\rho_{m})\Pi_{\lceil n(\eta N_{S}+(1-\eta)N_{B})+\delta_{5})\rceil}\}\\
&  \ \ \ \ \ \ +\frac{1}{M}\sum_{m}\left\Vert \Pi_{\lceil n(\eta N_{S}%
+(1-\eta)N_{B})+\delta_{5})\rceil}\mathcal{E}_{\eta,N_{B}}^{\otimes n}%
(\rho_{m})\Pi_{\lceil n(\eta N_{S}+(1-\eta)N_{B})+\delta_{5})\rceil
}-\mathcal{E}_{\eta,N_{B}}^{\otimes n}(\rho_{m})\right\Vert _{1}\\
&  \leq\frac{1}{M}\sum_{m}\text{Tr}\{\Lambda_{m}\Pi_{\lceil n(\eta
N_{S}+(1-\eta)N_{B})+\delta_{5})\rceil}\mathcal{E}_{\eta,N_{B}}^{\otimes
n}(\rho_{m})\Pi_{\lceil n(\eta N_{S}+(1-\eta)N_{B})+\delta_{5})\rceil}\}\\
&  \ \ \ \ \ \ +2\sqrt{\delta_{1}(n)+2\sqrt{\delta_{1}(n)}+\delta_{6}(n)}%
\end{align*}
We now focus on the term%
\begin{align}
&  \frac{1}{M}\sum_{m}\text{Tr}\{\Lambda_{m}\Pi_{\lceil n(\eta N_{S}%
+(1-\eta)N_{B})+\delta_{5})\rceil}\mathcal{E}_{\eta,N_{B}}^{\otimes n}%
(\rho_{m})\Pi_{\lceil n(\eta N_{S}+(1-\eta)N_{B})+\delta_{5})\rceil
}\}\nonumber\\
&  =\frac{1}{M}\sum_{m}\text{Tr}\{\Pi_{\lceil n(\eta N_{S}+(1-\eta
)N_{B})+\delta_{5})\rceil}\Lambda_{m}\Pi_{\lceil n(\eta N_{S}+(1-\eta
)N_{B})+\delta_{5})\rceil}\mathcal{E}_{\eta,N_{B}}^{\otimes n}(\rho
_{m})\}.\label{eq:succ-prob-projected}%
\end{align}
Rather than working with the states $\mathcal{E}_{\eta,N_{B}}^{\otimes
n}\left(  \rho_{m}\right)  $ directly, we consider states $\widetilde{\sigma}_{m}$ that
are $\varepsilon$-close in trace distance to $\mathcal{E}_{\eta,N_{B}}^{\otimes
n}\left(  \rho_{m}\right)  $, and this gives the following upper bound on
(\ref{eq:succ-prob-projected}):%
\begin{align*}
&  \leq\frac{1}{M}\sum_{m}\text{Tr}\{\Pi_{\lceil n(\eta N_{S}+(1-\eta
)N_{B})+\delta_{5})\rceil}\Lambda_{m}\Pi_{\lceil n(\eta N_{S}+(1-\eta
)N_{B})+\delta_{5})\rceil}\widetilde{\sigma}_{m}\}+\varepsilon\\
&  \leq\frac{1}{M}\sum_{m}\text{Tr}\{\Pi_{\lceil n(\eta N_{S}+(1-\eta
)N_{B})+\delta_{5})\rceil}\Lambda_{m}\Pi_{\lceil n(\eta N_{S}+(1-\eta
)N_{B})+\delta_{5})\rceil}\}\left\Vert \widetilde{\sigma}_{m}\right\Vert
_{\infty}+\varepsilon.
\end{align*}
Now, this last bound holds regardless of which $\widetilde{\sigma}_{m}$ we
pick, so we optimize over all of them that are $\varepsilon$-close to
$\mathcal{E}_{\eta,N_{B}}^{\otimes n}\left(  \rho_{m}\right)  $ (let us denote
this set by $\mathcal{B}^{\varepsilon}\left(  \mathcal{E}_{\eta,N_{B}%
}^{\otimes n}\left(  \rho_{m}\right)  \right)  $). This gives the tightest
upper bound on the success probability, leading to%
\begin{align*}
&  \frac{1}{M}\sum_{m}\text{Tr}\{\Pi_{\lceil n(\eta N_{S}+(1-\eta
)N_{B})+\delta_{5})\rceil}\Lambda_{m}\Pi_{\lceil n(\eta N_{S}+(1-\eta
)N_{B})+\delta_{5})\rceil}\mathcal{E}_{\eta,N_{B}}^{\otimes n}(\rho_{m})\}\\
&  \leq\frac{1}{M}\sum_{m}\text{Tr}\{\Pi_{\lceil n(\eta N_{S}+(1-\eta
)N_{B})+\delta_{5})\rceil}\Lambda_{m}\Pi_{\lceil n(\eta N_{S}+(1-\eta
)N_{B})+\delta_{5})\rceil}\}\inf_{\widetilde{\sigma}_{m}\in\mathcal{B}%
^{\varepsilon}\left(  \mathcal{E}_{\eta,N_{B}}^{\otimes n}\left(  \rho
_{m}\right)  \right)  }\left\Vert \widetilde{\sigma}_{m}\right\Vert _{\infty
}+\varepsilon%
\end{align*}
The quantity $\inf_{\widetilde{\sigma}_{m}\in\mathcal{B}^{\varepsilon}\left(
\mathcal{E}_{\eta,N_{B}}^{\otimes n}\left(  \rho_{m}\right)  \right)
}\left\Vert \widetilde{\sigma}_{m}\right\Vert _{\infty}$ is related to the
smooth min-entropy via%
\[
\inf_{\widetilde{\sigma}_{m}\in\mathcal{B}^{\varepsilon}\left(  \mathcal{E}%
_{\eta,N_{B}}^{\otimes n}\left(  \rho_{m}\right)  \right)  }\left\Vert
\widetilde{\sigma}_{m}\right\Vert _{\infty}=2^{-H_{\min}^{\varepsilon}\left(
\mathcal{E}_{\eta,N_{B}}^{\otimes n}(\rho_{m})\right)  },
\]
so we replace the expression above by%
\begin{align*}
&  \frac{1}{M}\sum_{m}\text{Tr}\{\Pi_{\lceil n(\eta N_{S}+(1-\eta
)N_{B})+\delta_{5})\rceil}\Lambda_{m}\Pi_{\lceil n(\eta N_{S}+(1-\eta
)N_{B})+\delta_{5})\rceil}\}2^{-H_{\min}^{\varepsilon}\left(  \mathcal{E}%
_{\eta,N_{B}}^{\otimes n}(\rho_{m})\right)  }+\varepsilon \\
&  \leq\frac{1}{M}\sum_{m}\text{Tr}\{\Pi_{\lceil n(\eta N_{S}+(1-\eta
)N_{B})+\delta_{5})\rceil}\Lambda_{m}\Pi_{\lceil n(\eta N_{S}+(1-\eta
)N_{B})+\delta_{5})\rceil}\}\sup_{\rho}2^{-H_{\min}^{\varepsilon}\left(
\mathcal{E}_{\eta,N_{B}}^{\otimes n}(\rho)\right)  }+\varepsilon \\
&  =\frac{1}{M}2^{-\inf_{\rho}H_{\min}^{\varepsilon}\left(  \mathcal{E}%
_{\eta,N_{B}}^{\otimes n}(\rho)\right)  }\text{Tr}\{\Pi_{\lceil n(\eta
N_{S}+(1-\eta)N_{B})+\delta_{5})\rceil}\}+\varepsilon \\
&  \leq\frac{1}{M}2^{-\inf_{\rho}H_{\min}^{\varepsilon}\left(  \mathcal{E}%
_{\eta,N_{B}}^{\otimes n}(\rho)\right)  }2^{n\left[  g\left(  \eta
N_{S}+(1-\eta)N_{B}+\delta_{5}\right)  +\delta\right]  }+\varepsilon \\
&  \leq2^{-nR}2^{-\inf_{\rho}H_{2}\left(  \mathcal{E}_{\eta,N_{B}}^{\otimes
n}(\rho)\right)  +\log_{2}\left(  \frac{1}{\varepsilon}\right)  }2^{n\left[
g\left(  \eta N_{S}+(1-\eta)N_{B}+\delta_{5}\right)  +\delta\right]  }%
+\varepsilon \\
&  =2^{-nR}2^{-n\left[  \log_{2}\left(  1+2(1-\eta)N_{B})\right)  +\frac{1}%
{n}\log_{2}\left(  \frac{1}{\varepsilon}\right)  \right]  }2^{n\left[
g\left(  \eta N_{S}+(1-\eta)N_{B}+\delta_{5}\right)  +\delta\right]  }%
+\varepsilon.
\end{align*}
The first inequality follows by taking a supremum over all input states. The
first equality follows because $\sum_{m}\Lambda_{m}=I$. The second inequality
follows from the upper bound in (\ref{Rank}) on the rank of the photon number
subspace projector. The last few lines follow by applying the following
relation from~\cite{RWISIT1} between the smooth min-entropy $H_{\min}^{\varepsilon}\left(
\mathcal{E}_{\eta,N_{B}}^{\otimes n}\left(  \rho_{m}\right)  \right)  $ and
the quantum R\'{e}nyi entropy $H_{\alpha}\left(  \mathcal{E}_{\eta,N_{B}%
}^{\otimes n}\left(  \rho_{m}\right)  \right)  $ in (\ref{Renyismoothing}%
) for $\alpha=2$, yielding%
\begin{align*}
\inf_{\rho}H_{\min}^{\varepsilon}\left(  \mathcal{E}_{\eta,N_{B}}^{\otimes
n}\left(  \rho_{m}\right)  \right)   &  \geq\inf_{\rho}H_{2}\left(
\mathcal{E}_{\eta,N_{B}}^{\otimes n}\left(  \rho\right)  \right)  -\log\left(
\frac{1}{\varepsilon}\right)  \\
&  \geq n\inf_{\omega}H_{2}\left(  \mathcal{E}_{\eta,N_{B}}\left(
\omega\right)  \right)  -\log\left(  \frac{1}{\varepsilon}\right)  \\
&  =n\log_{2}\left(  1+2(1-\eta)N_{B}\right)  -\log\left(  \frac
{1}{\varepsilon}\right)  ,
\end{align*}
where the last two lines above follow from the main result of
\cite{GGLMS04,GLMS04}, that the output R\'{e}nyi entropy of order two is
minimized by the $n$-fold tensor-product vacuum state. We now see that we can choose $\varepsilon
=\frac{1}{n}$, and we arrive at the following bound%
\[
2^{-nR}2^{-n\left[  \log_{2}\left(  1+2(1-\eta)N_{B}\right)  +\frac{1}{n}%
\log_{2}\left(  n\right)  \right]  }2^{n\left[  g\left(  \eta N_{S}%
+(1-\eta)N_{B}+\delta_{5}\right)  +\delta\right]  }+\frac{1}{n}.
\]
Putting everything together, we arrive at the following inequality:%
\begin{multline*}
\frac{1}{M}\sum_{m}\text{Tr}\{\Lambda_{m}\mathcal{E}_{\eta,N_{B}}^{\otimes
n}(\rho_{m})\}\\
\leq2^{-n\left[  R-\left[  g\left(  \eta N_{S}+(1-\eta)N_{B}+\delta
_{5}\right)  -\log_{2}\left(  1+2(1-\eta)N_{B}\right)  \right]  +\frac{1}%
{n}\log_{2}\left(  n\right)  -\delta\right]  }\\
+\frac{1}{n}+2\sqrt{\delta_{1}(n)+2\sqrt{\delta_{1}(n)}+\delta_{6}(n)}%
\end{multline*}

This upper bound on the success probability demonstrates that for a rate%
\[
R>\left[  g\left(  \eta N_{S}+\left(  1-\eta\right)  N_{B}\right)  -\log
_{2}\left(  1+2(1-\eta)N_{B}\right)  \right]  ,
\]
we can choose $\delta_{5}$ and $\delta$ small enough so that the success
probability decreases to zero in the limit of large $n$.
\end{proof}

\subsection{Giovannetti \textit{et al}.~bound is a strong converse rate for
the additive noise channel}

Using (\ref{additivecomposition}), it follows that we can take the limit
$(1-\eta)N_{B}\rightarrow\bar{n}$ (with $N_{B}\rightarrow\infty$ and
$\eta\rightarrow1$) to prove that the upper bound in (\ref{additiveUB2})
serves as a strong converse rate for the additive noise channel~$\mathcal{N}%
_{\bar{n}}$. The arguments for showing the strong converse rate for the
thermal channel then apply for the additive noise channel, and we can say
that, for the additive noise channel, the average success probability under a
maximum photon number constraint decreases to zero with many channel uses if
$R>[g\left(  N_{S}+\bar{n}\right)  -\log_{2}\left(  1+2\bar{n}\right)  ]$.

\section{Conclusion}

\label{sec:conclusion}

In this paper, we showed that several previously known upper bounds on the
classical capacity for the thermal noise and additive noise channels are
actually strong converse rates. We did this by imposing a particular maximum photon number
constraint, guaranteeing that the inputs to the channel have almost all of
their shadow on a subspace with photon number no larger than $nN_{S}$. The
classical capacity of these two channels are not exactly known; however, our
results strengthen the interpretation of known upper bounds on the classical
capacity of these two channels, so that there is no 
room for a trade-off between the communication rate and 
error probability. Besides having an application to proving security in
some particular models of cryptography~\cite{KS4}, it should be possible to
extend our results to multiple-access bosonic channels, i.e., bosonic channels
in which two or more senders communicate to a common receiver over a shared
channel~\cite{MAC}.

\textit{Note}: After the completion of this work, we discovered very recently that
Giovannetti, Holevo, and Garcia-Patron proposed a solution to the long-standing
minimum output entropy conjecture \cite{GHG13}\ (in fact a more general
Gaussian optimizer conjecture). Their results imply that the lower bounds in
(\ref{eq:LB}) and (\ref{eq:additiveLB}) are in fact upper bounds as well, so
that they have identified the classical capacity of these channels. After
browsing their proof, we think that it should be possible to combine their
results with the development here in order to prove that the rates in
(\ref{eq:LB}) and (\ref{eq:additiveLB}) are in fact strong converse rates (so
that there is a strong converse theorem for the classical capacity of these
channels). In order to arrive at this conclusion, one would need at the very
least to extend their development in Section~6\ to prove that the minimum
output R\'enyi entropy for all $\alpha\geq1$ is minimized by the vacuum state.
One could then take a similar approach as we did in the last few steps of the
proof of Theorem~\ref{TheoremII}. However, this remains the subject of future research.

\section{Acknowledgements}

We are grateful to Raul Garcia-Patron and Andreas Winter for insightful
discussions. BRB would like to acknowledge supports from an Army Research
Office grant (W911NF-13-1-0381) and the Charles E. Coates Memorial Research 
Award Grant from Louisiana State University. MMW is grateful to the Department of Physics
and Astronomy at Louisiana State University for startup funds that supported
this research.

\appendix

\section{Relation between smooth min-entropy and R\'enyi entropy}

\label{smoothentropy}

For completeness, we include a proof of the main result of \cite{RWISIT1}:

\begin{lemma}
For any random variable $Z$, $\alpha>1$, and $\varepsilon\in\left(
0,1\right)  $, the following inequality holds%
\[
H_{\min}^{\varepsilon}\left(  Z\right)  \geq H_{\alpha}\left(  Z\right)
-\frac{1}{\alpha-1}\log\left(  \frac{1}{\varepsilon}\right)  .
\]

\end{lemma}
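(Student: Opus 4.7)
The plan is to construct an explicit subnormalized distribution $\tilde Z$ that is $\varepsilon$-close to $Z$ and whose maximum probability is bounded by $2^{-H_{\alpha}(Z) + \frac{1}{\alpha-1}\log(1/\varepsilon)}$. Set the threshold
\[
s \equiv 2^{-H_{\alpha}(Z) + \frac{1}{\alpha-1}\log(1/\varepsilon)} = 2^{-H_{\alpha}(Z)}\,\varepsilon^{-1/(\alpha-1)},
\]
and partition the outcome space into a ``heavy'' set $B = \{z : P_{Z}(z) > s\}$ and its complement. Define $\tilde Z$ by truncation, $P_{\tilde Z}(z) = \min\{P_{Z}(z), s\}$, which is manifestly subnormalized with $\|\tilde Z\|_{\infty} \le s$, so that $-\log_{2}\|\tilde Z\|_{\infty} \ge H_{\alpha}(Z) - \frac{1}{\alpha-1}\log(1/\varepsilon)$.

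Next I would show that $\tilde Z$ lies in the $\varepsilon$-ball around $Z$. The total variation distance reduces to
\[
\tfrac{1}{2}\sum_{z}|P_{Z}(z) - P_{\tilde Z}(z)| = \tfrac{1}{2}\sum_{z \in B}\bigl(P_{Z}(z) - s\bigr) \le \sum_{z \in B} P_{Z}(z) = \Pr\{P_{Z}(Z) > s\}.
\]
The key estimate is a Markov-style bound on this probability. Since $\alpha > 1$, the function $x \mapsto x^{\alpha-1}$ is increasing, so
\[
\Pr\{P_{Z}(Z) > s\} = \Pr\{P_{Z}(Z)^{\alpha-1} > s^{\alpha-1}\} \le \frac{\mathbb{E}[P_{Z}(Z)^{\alpha-1}]}{s^{\alpha-1}} = \frac{\sum_{z} P_{Z}(z)^{\alpha}}{s^{\alpha-1}} = \frac{2^{(1-\alpha)H_{\alpha}(Z)}}{s^{\alpha-1}}.
\]
Substituting the chosen value of $s$ gives $s^{\alpha-1} = 2^{(1-\alpha)H_{\alpha}(Z)}/\varepsilon$, so the right-hand side collapses to $\varepsilon$, and hence $\tilde Z \in \mathcal{B}^{\varepsilon}(Z)$.

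Putting the two pieces together, the definition of the smooth min-entropy via the supremum over the $\varepsilon$-ball yields $H_{\min}^{\varepsilon}(Z) \ge -\log_{2}\|\tilde Z\|_{\infty} \ge H_{\alpha}(Z) - \tfrac{1}{\alpha-1}\log_{2}(1/\varepsilon)$, which is the claim. The only delicate point, and arguably the only real step, is choosing the truncation level $s$ so that the Markov bound on the tail probability exactly matches the allotted $\varepsilon$ budget; everything else is bookkeeping. Normalization issues are avoided by working with subnormalized $\tilde Z$, consistent with the definition of $H_{\min}^{\varepsilon}$ used in the main text via the infimum of the infinity norm over the $\varepsilon$-ball.
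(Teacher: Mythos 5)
Your argument is correct, and it is essentially the paper's own proof run in the dual direction: the paper fixes the clipped mass to be exactly $\varepsilon$ and then bounds the resulting truncation level $p$ via $\sum_{z}p_{Z}(z)^{\alpha}\geq p^{\alpha-1}\varepsilon$, whereas you fix the level $s$ at the target value and bound the clipped mass by Markov's inequality applied to $P_{Z}(Z)^{\alpha-1}$ --- the same inequality, merely rearranged. The only point to tighten is normalization: the smoothing set $\mathcal{B}^{\varepsilon}$ as stated in the paper consists of genuine (normalized) distributions/density operators, and the paper accordingly renormalizes by redistributing the excess mass onto the light outcomes, while your $\tilde{Z}$ is subnormalized. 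Either invoke the standard convention that the ball admits subnormalized states, or add one line: the clipped mass (at most $\varepsilon$) can be returned to outcomes with $P_{Z}(z)<s$ without exceeding the level $s$, since $s\geq 2^{-H_{\alpha}(Z)}\geq 1/\left\vert \operatorname{supp}(P_{Z})\right\vert$ guarantees enough room; with that remark your conclusion $H_{\min}^{\varepsilon}(Z)\geq-\log_{2}s=H_{\alpha}(Z)-\frac{1}{\alpha-1}\log_{2}(1/\varepsilon)$ goes through verbatim.
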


\begin{proof}
Let $p_{Z}\left(  z\right)  $ be a probability distribution for $Z$. Suppose
without loss of generality that the elements of the distribution are in
decreasing order. In order to prove this inequality, we should find another
distribution $q_{Z}\left(  z\right)  $ such that $\frac{1}{2}\sum
_{z}\left\vert p_{Z}\left(  z\right)  -q_{Z}\left(  z\right)  \right\vert
=\varepsilon$ and for which the inequality holds. We will choose it to have
the same support as $p_{Z}\left(  z\right)  $. To this end, let $p$ be a real
less than $p_{\max}\left(  Z\right)  $ and such that%
\[
\sum_{z\in\mathcal{Z}_{p}}\left(  p_{Z}\left(  z\right)  -p\right)
=\varepsilon,
\]
where $\mathcal{Z}_{p}\equiv\left\{  z:p_{Z}\left(  z\right)  \geq p\right\}
$. (We assume here that $\varepsilon$ is small enough such that $p$ could be
the maximum probability of a legitimate probability distribution.) Then this
relation implies that%
\begin{equation}
\sum_{z\in\mathcal{Z}_{p}}p_{Z}\left(  z\right)  =\left\vert \mathcal{Z}%
_{p}\right\vert p+\varepsilon. \label{eq:robin-hood}%
\end{equation}

Now we consider $q_{Z}\left(  z\right)  $ as a uniform redistribution of the
excess probability $\sum_{z\in\mathcal{Z}_{p}}\left(  p_{Z}\left(  z\right)
-p\right)  =\varepsilon$ to probabilities with values $z\notin\mathcal{Z}_{p}%
$:%
\[
q_{Z}\left(  z\right)  =\left\{
\begin{array}
[c]{cc}%
p & z\in\mathcal{Z}_{p}\\
p_{Z}\left(  z\right)  \left(  1+\frac{\varepsilon}{\sum_{z\notin
\mathcal{Z}_{p}}p_{Z}\left(  z\right)  }\right)  & z\notin\mathcal{Z}_{p}%
\end{array}
\right.  .
\]
The function $q_{Z}\left(  z\right)  $ defined above is indeed a probability
distribution because all of its elements are non-negative and%
\begin{align*}
\sum_{z}q_{Z}\left(  z\right)   &  =\sum_{z\in\mathcal{Z}_{p}}q_{Z}\left(
z\right)  +\sum_{z\notin\mathcal{Z}_{p}}q_{Z}\left(  z\right) \\
&  =\left\vert \mathcal{Z}_{p}\right\vert p+\sum_{z\notin\mathcal{Z}_{p}}%
p_{Z}\left(  z\right)  \left(  1+\frac{\varepsilon}{\sum_{z\notin
\mathcal{Z}_{p}}p_{Z}\left(  z\right)  }\right) \\
&  =\left\vert \mathcal{Z}_{p}\right\vert p+\sum_{z\notin\mathcal{Z}_{p}}%
p_{Z}\left(  z\right)  +\varepsilon\\
&  =1,
\end{align*}
where in the third line we used (\ref{eq:robin-hood}). Furthermore, the
variational distance between $p_{Z}$ and $q_{Z}$ is%
\begin{align*}
&  \frac{1}{2}\sum_{z}\left\vert p_{Z}\left(  z\right)  -q_{Z}\left(
z\right)  \right\vert \\
&  =\frac{1}{2}\sum_{z\in\mathcal{Z}_{p}}\left\vert p_{Z}\left(  z\right)
-q_{Z}\left(  z\right)  \right\vert +\frac{1}{2}\sum_{z\notin\mathcal{Z}_{p}%
}\left\vert p_{Z}\left(  z\right)  -q_{Z}\left(  z\right)  \right\vert \\
&  =\frac{1}{2}\sum_{z\in\mathcal{Z}_{p}}\left\vert p_{Z}\left(  z\right)
-p\right\vert +\frac{1}{2}\sum_{z\notin\mathcal{Z}_{p}}\left\vert p_{Z}\left(
z\right)  -\left(  p_{Z}\left(  z\right)  \left(  1+\frac{\varepsilon}%
{\sum_{z\notin\mathcal{Z}_{p}}p_{Z}\left(  z\right)  }\right)  \right)
\right\vert \\
&  =\frac{1}{2}\varepsilon+\frac{1}{2}\sum_{z\notin\mathcal{Z}_{p}}%
p_{Z}\left(  z\right)  \frac{\varepsilon}{\sum_{z\notin\mathcal{Z}_{p}}%
p_{Z}\left(  z\right)  }\\
&  =\varepsilon.
\end{align*}
Now consider that%
\begin{align*}
\sum_{z}p_{Z}\left(  z\right)  ^{\alpha}  &  \geq\sum_{z\in\mathcal{Z}_{p}%
}p_{Z}\left(  z\right)  ^{\alpha}\\
&  \geq p^{\alpha-1}\sum_{z\in\mathcal{Z}_{p}}p_{Z}\left(  z\right) \\
&  \geq p^{\alpha-1}\varepsilon,
\end{align*}
where the second inequality follows from the definition of $\mathcal{Z}_{p}$,
which implies that $1\geq\left(  p/p_{Z}\left(  z\right)  \right)  ^{\alpha
-1}$ whenever $z\in\mathcal{Z}_{p}$. From this, we see that%
\[
-\log p\geq\frac{1}{1-\alpha}\log\sum_{z}p_{Z}\left(  z\right)  ^{\alpha
}-\frac{1}{\alpha-1}\log\left(  \frac{1}{\varepsilon}\right)  .
\]
This inequality then implies the statement of the lemma because%
\begin{align*}
H_{\min}^{\varepsilon}\left(  Z\right)   &  \geq H_{\min}\left(  q_{Z}\right)
\\
&  =-\log p.
\end{align*}
\end{proof}

A generalization of the above proof to the quantum setting considering all
density operators $\widetilde{\rho}$ that are $\varepsilon$-close to density
operator $\rho$ for $\varepsilon>0$ gives the following relation between
quantum smooth min-entropy $H_{\min}^{\varepsilon}\left(  \rho\right)  $ and
the quantum R\'{e}nyi entropy $H_{\alpha}\left(  \rho\right)  $%
(\ref{Renyismoothing}):{%
\[
H_{\min}^{\varepsilon}\left(  \rho\right)  \geq H_{\alpha}\left(  \rho\right)
-\frac{1}{\alpha-1}\log\left(  \frac{1}{\varepsilon}\right)  .
\]
The same proof works for density operators that act on a separable Hilbert
space (since such density operators are diagonalized by a countable orthonormal basis),
which is the case for our considerations in this paper.}

\section{Structural decompositions of the bosonic channels using symplectic
formalism}

\label{symplectic} Here, for completeness, we review in detail an argument for
the structural decompositions of the bosonic channels using the symplectic
formalism~\cite{symplectic,WPGCRSL12} (however, note that these results were
well known much before the present paper). In this formalism, the action of a
Gaussian channel is characterized by two matrices $X$ and $Y$ which act as
follows on covariance matrix $\Gamma$
\begin{equation}
\Gamma\longrightarrow\Gamma^{\prime}=X\Gamma X^{T}+Y,
\end{equation}
where $X^{T}$ is the transpose of the matrix $X$. Such a map is called as the
symplectic map which applies to any Gaussian channel. Below we describe the
symplectic transformations for each of the channels $\mathcal{N}_{\bar{n}}$,
$\mathcal{E}_{\eta,0}$, $\mathcal{A}_{G}$, and $\mathcal{E}_{\eta,N_{B}}$:

\begin{itemize}
\item {The additive noise channel $\mathcal{N}_{\bar{n}}$ with variance
$\bar{n}$ is given by
\begin{equation}
\label{m1}X = \mathbb{I}~\text{and}~Y = 2 \bar{n}~\mathbb{I},
\end{equation}
where $\mathbb{I}$ represents the identity matrix.
}
\item {The pure-loss channel $\mathcal{E}_{\eta,0}$ with transmissivity
$\eta<1$ is given by
\begin{equation}
\label{m2}X = \sqrt{\eta}~\mathbb{I}~\text{and}~Y = (1-\eta)~\mathbb{I}.
\end{equation}
}
\item {The thermal noise channel $\mathcal{E}_{\eta,N_{B}}$ with
transmissivity $\eta<1$ and noise photon number $N_{B}$ is given by
\begin{equation}
\label{m3}X = \sqrt{\eta}~\mathbb{I}~\text{and}~Y = (1-\eta)(2N_{B}%
+1)~\mathbb{I}.
\end{equation}
}
\item {The amplifier channel $\mathcal{A}_{G}$ with gain $G >1$ is given by
\begin{equation}
\label{m4}X = \sqrt{G}~\mathbb{I}~\text{and}~Y = (G-1)~\mathbb{I}.
\end{equation}
}
\end{itemize}

We now show that the additive noise channel $\mathcal{N} _{\bar{n}}$ can be
regarded as a pure-loss bosonic channel $\mathcal{E}_{\eta,0}$ with
$\eta=1/(\bar{n}+1)$ followed by an amplifier channel $\mathcal{A}_{G}$ with
$G=(\bar{n}+1)$. To do so, we substitute
\begin{align*}
X_{1}  &  =\sqrt{1/(\bar{n}+1)} \mathbb{I},\\
Y_{1}  &  =(1-(1/(\bar{n}+1))~\mathbb{I},\\
X_{2}  &  = \sqrt{(\bar{n}+1)}~\mathbb{I},\\
Y_{2}  &  = \bar{n}~\mathbb{I}.
\end{align*}
in (\ref{m2}) and (\ref{m4}), where $(X_{1},Y_{1})$ and $(X_{2},Y_{2})$
correspond to the pure-loss bosonic channel $\mathcal{E}_{\eta,0}$ and the
amplifier channel $\mathcal{A}_{G}$, respectively. The covariance matrix
$\Gamma_{12}$ for the composite map $(\mathcal{A}_{\bar{n}+1} \circ
\mathcal{E}_{\frac{1}{\bar{n}+1},0})$ is then obtained as $\Gamma_{12}=X_{2}
(X_{1} \Gamma X_{1}^{T} + Y_{1}) X_{2}^{T} + Y_{2}=\Gamma~\mathbb{I}+ 2
\bar{n} ~\mathbb{I}$, which represents the additive noise channel $\mathcal{N}
_{\bar{n}}$ [(\ref{m1})]. Thus, we recover the decomposition in
(\ref{c3})
\[
\mathcal{N}_{\bar{n}}(\rho)=(\mathcal{A}_{\bar{n}+1} \circ\mathcal{E}
_{\frac{1}{\bar{n}+1},0})(\rho)\, .
\]

Following a similar approach we can find the other structural decompositions
in (\ref{compositionI}) and (\ref{eq:compositionII}):
\begin{align*}
\mathcal{E}_{\eta,N_{B}}(\rho) &  =\left(  \mathcal{N}_{(1-\eta)N_{B}}%
\circ\mathcal{E}_{\eta,0}\right)  (\rho),\\
\mathcal{E}_{\eta,N_{B}}(\rho) &  =\left(  \mathcal{A}_{G}\circ\mathcal{E}%
_{\eta,0}\right)  (\rho).
\end{align*}

\bibliographystyle{plain}
\bibliography{Ref_new}

\begin{thebibliography}{10}

\bibitem{Arimoto}
Suguru Arimoto.
\newblock On the converse to the coding theorem for discrete memoryless
  channels.
\newblock {\em IEEE Transactions on Information Theory}, 19:357--359, May 1973.

\bibitem{CGH06}
Filippo Caruso, Vittorio Giovannetti, and Alexander~S. Holevo.
\newblock One-mode bosonic {Gaussian} channels: A full weak-degradability
  classification.
\newblock {\em New Journal of Physics}, 8(12):310, 2006.
\newblock arXiv:quant-ph/0609013.

\bibitem{Caves}
Carlton~M. Caves.
\newblock Hidden variable model for continuous-variable teleportation.
\newblock August 2003.
\newblock info.phys.unm.edu/~caves/reports/cvteleportation.pdf.

\bibitem{symplectic}
Jens Eisert and Michael~M Wolf.
\newblock Gaussian quantum channels.
\newblock {\em Quantum Information with Continuous Variables of Atoms and
  Light}, pages 23--42, 2007.
\newblock arXiv:quant-ph/0505151.

\bibitem{GNLSC12}
Raul Garcia-Patron, Carlos Navarrete-Benlloch, Seth Lloyd, Jeffrey~H. Shapiro,
  and Nicolas~J. Cerf.
\newblock Majorization theory approach to the {Gaussian} channel minimum
  entropy conjecture.
\newblock {\em Physical Review Letters}, 108:110505, March 2012.
\newblock arXiv:1111.1986.

\bibitem{GK04}
Christopher Gerry and Peter Knight.
\newblock {\em Introductory Quantum Optics}.
\newblock Cambridge University Press, November 2004.

\bibitem{GGLMS04}
Vittorio Giovannetti, Saikat Guha, Seth Lloyd, Lorenzo Maccone, and Jeffrey~H.
  Shapiro.
\newblock Minimum output entropy of bosonic channels: A conjecture.
\newblock {\em Physical Review A}, 70:032315, September 2004.
\newblock arXiv:quant-ph/0404005.

\bibitem{SGGLMY04}
Vittorio Giovannetti, Saikat Guha, Seth Lloyd, Lorenzo Maccone, Jeffrey~H.
  Shapiro, and Brent~J. Yen.
\newblock Capacity of bosonic communications.
\newblock {\em AIP Conference Proceedings: QCMC 2004}, 734:21, July 2004.

\bibitem{GGLMSY04}
Vittorio Giovannetti, Saikat Guha, Seth Lloyd, Lorenzo Maccone, Jeffrey~H.
  Shapiro, and Horace~P. Yuen.
\newblock Classical capacity of the lossy bosonic channel: The exact solution.
\newblock {\em Physical Review Letters}, 92(2):027902, January 2004.
\newblock arXiv:quant-ph/0308012.

\bibitem{GHG13}
Vittorio Giovannetti, Alexander~S. Holevo, and Ra\'{u}l Garc\'{\i}a-Patr\'{o}n.
\newblock A solution of the {Gaussian} optimizer conjecture.
\newblock December 2013.
\newblock arXiv:1312.2251.

\bibitem{PracticalPurposes}
Vittorio Giovannetti, Seth Lloyd, Lorenzo Maccone, and Jeffrey~H. Shapiro.
\newblock Electromagnetic channel capacity for practical purposes.
\newblock {\em Nature Photonics}, 7(10):834--838, October 2013.
\newblock arXiv:1210.3300.

\bibitem{GLMS04}
Vittorio Giovannetti, Seth Lloyd, Lorenzo Maccone, Jeffrey~H. Shapiro, and
  Brent~J. Yen.
\newblock Minimum {R\'enyi} and {Wehrl} entropies at the output of bosonic
  channels.
\newblock {\em Physical Review A}, 70:022328, August 2004.
\newblock arXiv:quant-ph/0404037.

\bibitem{Hall94}
Michael J.~W. Hall.
\newblock Gaussian noise and quantum-optical communication.
\newblock {\em Physical Review A}, 50:3295--3303, October 1994.

\bibitem{HO93}
Michael J.~W. Hall and M.~J. O'Rourke.
\newblock Realistic performance of the maximum information channel.
\newblock {\em Quantum Optics: Journal of the European Optical Society Part B},
  5(3):161, June 1993.

\bibitem{Hol98}
Alexander~S. Holevo.
\newblock The capacity of the quantum channel with general signal states.
\newblock {\em IEEE Transactions on Information Theory}, 44:269--273, 1998.

\bibitem{HG12}
Alexander~S. Holevo and Vittorio Giovannetti.
\newblock Quantum channels and their entropic characteristics.
\newblock {\em Reports on Progress in Physics}, 75(4):046001, April 2012.
\newblock arXiv:1202.6480.

\bibitem{HW01}
Alexander~S. Holevo and Reinhard~F. Werner.
\newblock Evaluating capacities of bosonic {Gaussian} channels.
\newblock {\em Physical Review A}, 63:032312, February 2001.
\newblock arXiv:quant-ph/9912067.

\bibitem{KS1}
Robert Koenig and Graeme Smith.
\newblock The entropy power inequality for quantum systems.
\newblock {\em IEEE Transactions on Information Theory (to be published)}, May
  2012.
\newblock arXiv:1205.3409.

\bibitem{KS3}
Robert Koenig and Graeme Smith.
\newblock Classical capacity of quantum thermal noise channels to within
  1.45~bits.
\newblock {\em Physical Review Letters}, 110:040501, January 2013.
\newblock arXiv:1207.0256.

\bibitem{KS2}
Robert Koenig and Graeme Smith.
\newblock Limits on classical communication from quantum entropy power
  inequalities.
\newblock {\em Nature Photonics}, 7:142--146, 2013.
\newblock arXiv:1205.3407.

\bibitem{KW09}
Robert Koenig and Stephanie Wehner.
\newblock A strong converse for classical channel coding using entangled
  inputs.
\newblock {\em Physical Review Letters}, 103:070504, August 2009.
\newblock arXiv:0903.2838.

\bibitem{KS4}
Robert Koenig, Stephanie Wehner, and J\"urg Wullschleger.
\newblock Unconditional security from noisy quantum storage.
\newblock {\em IEEE Transactions on Information Theory}, 58:1962--1984, 2012.
\newblock arXiv:0906.1030.

\bibitem{Lupo2011}
Cosmo Lupo, Stefano Pirandola, Paolo Aniello, and Stefano Mancini.
\newblock On the classical capacity of quantum gaussian channels.
\newblock {\em Physica Scripta}, T143:014016, 2011.
\newblock arXiv:1012.5965v2.

\bibitem{N99}
Ashwin Nayak.
\newblock Optimal lower bounds for quantum automata and random access codes.
\newblock In {\em Proceedings of the 40th Annual Symposium on Foundations of
  Computer Science}, pages 369--376, New York City, NY, USA, October 1999.
\newblock arXiv:quant-ph/9904093.

\bibitem{Ogawa}
Tomohiro Ogawa and Hiroshi Nagaoka.
\newblock Strong converse to the quantum channel coding theorem.
\newblock {\em IEEE Transactions on Information Theory}, 45:2486--2489,
  November 1999.
\newblock arXiv:quant-ph/9808063.

\bibitem{ON07}
Tomohiro Ogawa and Hiroshi Nagaoka.
\newblock Making good codes for classical-quantum channel coding via quantum
  hypothesis testing.
\newblock {\em IEEE Transactions on Information Theory}, 53(6):2261--2266, June
  2007.

\bibitem{RennerThesis}
Renato Renner.
\newblock {\em Security of Quantum Key Distribution}.
\newblock PhD thesis, ETH Z\"urich, December 2005.
\newblock arXiv:quant-ph/0512258.

\bibitem{RS}
Renato Renner and Robert Koenig.
\newblock {\em Universally Composable Privacy Amplification Against Quantum
  Adversaries}, volume 3378 of {\em Lecture Notes in Computer Science}.
\newblock Springer Berlin Heidelberg, 2005.
\newblock arXiv:quant-ph/0403133.

\bibitem{RWISIT1}
Renato Renner and Stefan Wolf.
\newblock Smooth {R\'enyi} entropy and applications.
\newblock In {\em Proceedings of the 2007 International Symposium on
  Information Theory}, page 232, 2004.

\bibitem{SW97}
Benjamin Schumacher and Michael~D. Westmoreland.
\newblock Sending classical information via noisy quantum channels.
\newblock {\em Physical Review A}, 56:131--138, July 1997.

\bibitem{T12}
Terence Tao.
\newblock {\em Topics in Random Matrix Theory}, volume 132 of {\em Graduate
  Studies in Mathematics}.
\newblock American Mathematical Society, 2012.
\newblock see also
  http://terrytao.wordpress.com/2010/01/03/254a-notes-1-concentration-of-measure.

\bibitem{Tomamichelthesis}
Marco Tomamichel.
\newblock {\em A Framework for Non-Asymptotic Quantum Information Theory}.
\newblock PhD thesis, ETH Z\"urich, March 2012.
\newblock arXiv:1203.2142.

\bibitem{WPGCRSL12}
Christian Weedbrook, Stefano Pirandola, Ra\'{u}l Garc\'{\i}a-Patr\'{o}n,
  Nicolas~J. Cerf, Timothy~C. Ralph, Jeffrey~H. Shapiro, and Seth Lloyd.
\newblock Gaussian quantum information.
\newblock {\em Reviews of Modern Physics}, 84:621--669, May 2012.
\newblock arXiv:1110.3234.

\bibitem{StrongConversePureLoss}
Mark~M. Wilde and Andreas Winter.
\newblock Strong converse for the classical capacity of the pure-loss bosonic
  channel.
\newblock {\em Problems of Information Transmission (to be published)}, August
  2013.
\newblock arXiv:1308.6732.

\bibitem{Entanglementbreaking}
Mark~M. Wilde, Andreas Winter, and Dong Yang.
\newblock Strong converse for the classical capacity of entanglement-breaking
  and {Hadamard} channels.
\newblock June 2013.
\newblock arXiv:1306.1586.

\bibitem{W99}
Andreas Winter.
\newblock Coding theorem and strong converse for quantum channels.
\newblock {\em IEEE Transactions on Information Theory}, 45(7):2481--2485,
  1999.

\bibitem{Wolfowitz1964}
Jacob Wolfowitz.
\newblock {\em Coding Theorems of Information Theory}, volume~31.
\newblock Springer, 1964.

\bibitem{MAC}
Brent~J. Yen and Jeffrey~H. Shapiro.
\newblock Multiple-access bosonic communications.
\newblock {\em Physical Review A}, 72:062312, December 2005.
\newblock arXiv:quant-ph/0506171.

\end{thebibliography}

\end{document}